\tikzset{
  supset/.style={
    draw=none,
    edge node={node [sloped,  auto=false]{$\supset$}}},
  Subset/.style={
    draw=none,
    every to/.append style={
      edge node={node [sloped, allow upside down, auto=false]{$\subset$}}}
  }
}
\newtheorem{corollary}{Corollary}[section]
\newtheorem{proposition}{Proposition}[section]
\newtheorem{definition}{Definition}[section]
\newtheorem{algorithm}{Algorithm}
\def \S {{\mathcal{S}}}
\def \H {{\mathcal{H}}}
\newcommand{\R}{\mathbb{R}}
\newcommand{\D}{\mathcal{D}}
\newcommand {\argmin} {\mathop{\rm argmin}}
\newcommand{\tr}{\mathop{\rm tr}}
\begin{document}

\title{Smoothing splines on Riemannian manifolds, with applications to 3D shape space}

\author{Kwang-Rae Kim$^*$, Ian L.~Dryden$^+$, Huiling Le$^+$ and Katie E.~Severn$^+$\\
$^*$SAS Korea and  $^+$University of Nottingham}

\date{}

\maketitle

\begin{abstract}
There has been increasing interest in statistical analysis of data lying in manifolds. This paper generalizes a smoothing spline fitting method to Riemannian manifold data based on the technique of unrolling, unwrapping and wrapping originally proposed in \cite{JK} for spherical data. In particular we develop such a fitting procedure for shapes of configurations in general $m$-dimensional Euclidean space, extending our previous work for  two dimensional shapes. We show that parallel transport along a geodesic on Kendall shape space is linked to the solution of a homogeneous first-order differential equation, some of whose coefficients are implicitly defined functions. This finding enables us to approximate the procedure of unrolling and unwrapping by simultaneously solving such equations numerically, and so to find numerical solutions for smoothing splines fitted to higher dimensional shape data. This fitting method is applied to the analysis of some dynamic 3D peptide data. 
\end{abstract}

\vskip 0.5cm 

\begin{small}
\noindent
{\bf Keywords:} cubic spline; geodesic; non-parametric regression; linear spline; parallel transport; peptide; tangent space; unrolling; unwrapping; wrapping.
\end{small}

\section{Introduction}

Analysis of temporal shape data has become increasingly important for applications in many fields,
and a common definition is that the shape of an object is what is left after removing the effects of rotation, translation and re-scaling 
\citep{Kendall84}. For an introduction to the statistical analysis of shape see \citet{Drydmard16,KBCL,Patrelli16} and \citet{Srivklas16}.
An important focus is on the shapes of landmark data, where each object consists of $k > m$ points in $\mathbb{R}^m$. After removing 
translation, scale and rotation the data lie in Kendall's shape space, denoted as $\Sigma^k_m$ \citep{Kendall84}.

Suppose that we are given a time series of landmark shape data which are measurements of a moving object.
One may wish to find the overall trend or be interested in the behaviour of the object at unobserved times,
including an explanation of how the shapes change between successive times.
For example, in the field of molecular biology, the study of dynamic proteins is of interest. However, currently there is very limited methodology available for fitting models for $m=3$ dimensional shape data which exhibit a large amount of 
variability. 

The main difficulty in applying classical statistical approaches directly to landmark shape data lies in the fact that shape spaces are 
curved manifolds and have singularities if $m=3$, so that standard multivariate linear methods are not appropriate. 
In particular, if we wish to fit a smoothing spline through points on a curved manifold, as pointed out by \citet{JK},  one cannot simply use a spline defined by 
linear combinations of manifold valued basis functions. Such linear operations are not available in general on a manifold. 
 \citet{JK}'s novel approach to the problem for spline fitting on a sphere was to unroll a base path onto a plane; unwrap the data onto the plane with respect to the base path; fit a smoothing spline to the unwrapped data in this plane; and finally wrap the fitted spline values back to the sphere. Their method was applied to fitting a smoothing spline to an apparent polar wander path on earth 
from positions of the paleomagnetic north pole taken over time. 

A generalisation of  \citet{JK}'s method was considered by \cite{KDL} for shape data in $m=2$ dimensions and by \cite{Pauley11} for data on general Lie groups 
(called `JK-cubics'). \cite{Pauley11}  also considered `Riemannian cubics' which involve solving the Euler-Lagrange equation.  All these
situations benefit from the manifolds being homogeneous (where local geometrical properties are identical over the manifold). 
However, so far it has not been possible to apply the technique to non-homogenous manifolds. 
In this paper we generalize  \citet{JK}'s method to general Riemannian manifolds, and in particular provide methods and applications for 
the important practical case of temporal sequences of 3D shape data.

If the shape changes of interest are small then an alternative way to make progress is to carry out classical techniques on the projection of the data onto the Procrustes tangent space \citep{KM} at an estimate of the mean of the data, as carried out by \cite{KMMA} and \cite{MKMFAL}. However, if the variability is large then the tangent space projection can be a poor approximation to the manifold, leading to inappropriate inference. If the data approximately follow 
a geodesic path then geodesic based fitting approaches would be appropriate, such as given by \cite{LK}, \cite{FLPJ} and \citet{Fletcher13}.
A more general method of principal geodesic analysis based on intrinsic methods of fitting geodesics to data on Riemannian manifolds was developed in \cite{HZ} and \cite{Hucketal10}. These methods are more widely applicable than the 
principal geodesic analysis proposed in \cite{FLPJ} and \citet{Fletcher13}, 
and provide an alternative definition of a mean. However, these geodesic approaches will be limited when the data follow 
more complicated paths. 
Other relevant work includes regression on Riemannian symmetric spaces \citep{Cornetal17}, intrinsic polynomial regression \citep{Hinkleetal14},  extrinsic local regression \citep{Linetal17}, global and local Fr\'echet regression 
\citep{Petemull19}, intrinsic and varying coefficient regression models for diffusion tensors 
\citep{Yuanetal13,Zhuetal09}. \cite{SDKLS} discussed smoothing splines using a Palais metric-based gradient method 
and,  as for \citet{Pauley11}'s Riemannian cubics, this approach
requires explicit knowledge of the complete
geometrical structure of the manifold, which is often difficult to calculate. None of these methods can be used for fitting general curves to 3D shape data, which is the main motivation for this paper.

In Section \ref{Sec2} we first provide an introduction to the basic geometrical ideas and then
generalize \citet{JK}'s method to Riemannian manifolds. In Section \ref{Sec3} we investigate how to use the result in \cite{Le03} to analyse three or more dimensional shape data from a more practical point of view. An important contribution in this paper is that we give an explicit method 
for computing parallel transport for landmark shapes in at least three dimensions. \citet{Le03} discussed 
parallel transport in quotient spaces in general, and in particular 
her Theorem 1 gives three mathematical conditions which need to be 
satisfied for parallel transport in shape space. For landmarks shapes in 
$m=2$ dimensions explicit expressions are available which satisfy the 
conditions, and were utilized by \citet{KDL} for fitting smoothing 
splines. {In contrast,} \citet{Le03}'s three conditions involve the unknown 
parallel transport vector $V(s)$ and some unknown skew-symmetric matrices 
$A(s)$ but no quantitative method for how to find such matrices and construct parallel transport in the general case. Although there was some discussion in 
special cases, no general method was given to construct a solution. 
In our paper we give Proposition \ref{Prop2}, with a proof, which enables 
us to find the $A(s)$ and compute the parallel transport $V(s)$ by numerically 
solving a system of homogeneous first-order ordinary differential equations (ODEs), some of whose coefficients are implicitly defined functions.   
So, for the first time, we are now able to compute 
the parallel transport for 3D landmark shapes. 

In addition we implement the generalisation
of \citet{JK}'s smoothing splines using unrolling, unwrapping and wrapping 
to 3D shape space, and an overview of an algorithm is given in Section \ref{algorithm}. By solving the ODEs numerically we can approximate the procedure of unrolling and unwrapping, and hence obtain numerical solutions for smoothing splines fitted to three dimensional shape data.
This is the first time that the unrolling and unwrapping procedure has been implemented for a non-homogenous quotient space as far as we are aware. 
Hence, our methodology has provided a 
{significant increase} in generality
to the very important practical case of analyzing 3D shape data. In Section \ref{Sec4} we apply the methodology to the analysis of 3D molecular 
dynamics data. Molecular dynamics data has been used in a wide variety of important applications, for example in the study of
protein folding and enzyme catalysis \citep{Karplus05}. A key component is the study of the 3D shape of a molecule as it changes 
over time. Finally in Section \ref{Sec5} we conclude the paper with a brief discussion.

\section{Smoothing splines on manifolds}\label{Sec2}
\subsection{Unrolling, unwrapping and wrapping}
Recall that three of the main ingredients of the spline fitting technique used in \cite{JK} and \cite{KDL} are respectively $(i)$ unrolling a curve, or path, in the relevant space to the tangent space at its starting point; $(ii)$ unwrapping points in the space at known times, with respect to a 
path, onto the tangent space at the starting point of the path; and $(iii)$ wrapping points onto a manifold, which is the reverse of unwrapping. 
{The path, with respect to which the unwrapping is carried out, is usually called the `base path'. To} generalize the spline fitting of \cite{JK} to general manifolds, we reformulate these three procedures in a more general manifold setting.

We briefly provide an introduction to some relevant aspects of differential geometry, and for further details there are many texts, including \citet{Bar10},  \citet{Boothby86} and \citet[Section 3.1]{Drydmard16}.
Informally a Riemannian manifold $M$ is a manifold that has a positive-definite inner product (Riemannian metric tensor $g$) defined on the tangent space at each $x \in M$, which varies smoothly with $x$.  The Riemannian metric tensor allows one to define geometrical properties of the manifold, including measuring distance along a shortest path (minimal geodesic) between two points on the manifold. Write $\tau_{x}(M)$ 
for the tangent space of $M$ which touches the manifold at $x$. The exponential map is a map from the 
tangent space to the manifold $M$ and the definition is given below. Its inverse is called the inverse exponential map. 
The left plot of Figure \ref{tanfig}  shows the case of a sphere for $M$, geodesic $\gamma$, tangent space $\tau_x(M)$, distance $d$, and the inverse exponential map projection from the manifold $x' \in M$ to  a point $q$ in the tangent space. 
The Euclidean 
distance $d$ between the pole $x$ and $q$ in the tangent space is the same length as the induced Riemannian distance 
$d(x,x')$ between $x$ and $x'$ along the geodesic $\gamma$, and this is a particular property of the exponential map for a Riemannian manifold.

\begin{figure}[htbp]
\centering
\begin{minipage}[b][5cm][s]{.45\textwidth}
\begin{tikzpicture}
[
  point/.style = {draw, circle, fill=black, inner sep=0.7pt},
]
\def\rad{2cm}
\coordinate (O) at (0,0); 
\coordinate (N) at (0,\rad);
\coordinate (R) at (0.923*\rad-10,0.382*\rad); 
\coordinate (TR) at (0.923*\rad,1.4*\rad-20);
\coordinate (RR) at (-0.923*\rad+15,0.282*\rad); 
\filldraw[ball color=white] (O) circle [radius=\rad];
\draw[dashed] 
  (\rad,0) arc [start angle=0,end angle=180,x radius=\rad,y radius=5mm];
\draw
  (\rad,0) arc [start angle=0,end angle=-180,x radius=\rad,y radius=5mm];
  \draw[black]
  (R) arc [start angle=22.5,end angle=80,x radius=\rad,y radius=20mm];
  \node[black] at (1,1.3) {$\gamma$};  
\begin{scope}[xslant=0.5,yshift=\rad,xshift=-2]
\filldraw[fill=gray!50,opacity=0.3]
  (-4,1) -- (3,1) -- (3,-1) -- (-4,-1) -- cycle;
\node at (-3,0.6) {$\tau_{{x}}(M)$};  
\end{scope}
\draw[dashed]
  (N) node[above] {$x$} -- (O) node[below] { };
\draw[black]
  (N) node[above] {} -- node[above] {$d$}(TR) node[right] {};
\draw (R)node[left]{$x'$};
\draw (TR)node[right]{$q$};
\path[->]
(R) edge [bend right] node [right]{$\exp_x^{-1}$} (TR)[dashed];
\node[point] at (N) {};
\node[point] at (R) {};
\node[point] at (TR) {};
\draw (RR)node[left]{$M$};
\end{tikzpicture}
\end{minipage}\qquad
\begin{minipage}[b][5cm][s]{.45\textwidth}\centering

\begin{tikzpicture}
[
  point/.style = {draw, circle, fill=black, inner sep=0.7pt},
]
\def\rad{2cm}
\coordinate (O) at (0,0); 
\coordinate (N) at (0,\rad); 
\coordinate (R) at (\rad-32,0.4); 
\coordinate (TR) at (0.923*\rad,1.4*\rad-20);
\coordinate (RR) at (-0.923*\rad+15,0.282*\rad);
\coordinate (O) at (0,0); 
\coordinate (B) at (0,\rad); 
\coordinate (C) at ({0.6*\rad*cos(30)},{0.6*\rad*sin(30)}); 
\coordinate (D) at (0,-\rad);
\coordinate (E) at ({-0.6*\rad*cos(30)+5},{0.6*\rad*sin(30)}); 
\filldraw[ball color=white] (O) circle [radius=\rad];
\draw[dashed] 
  (\rad,0) arc [start angle=0,end angle=180,x radius=\rad,y radius=5mm];
  %
%
\draw[dashed]
  (\rad,0) arc [start angle=0,end angle=-180,x radius=\rad,y radius=5mm];       
  %
    \draw[black]
  (B) arc [start angle={87},end angle=17,x radius={\rad*0.6},y radius=20mm] 
  \foreach \p in {0,10,...,100} {
      node[sloped,pos=\p*0.01]
      (N2 \p){}
    };
    \foreach \p in {0,30,60,90} {
      \draw[-latex, black, thick, shorten >= -0.5cm] (N2 \p.center) -- (N2 \p.east);
    };
\draw
  (C) arc [start angle=0,end angle=-180,x radius={0.6*\rad*cos(30)-2.5},y radius=1mm]
  \foreach \p in {0,5,...,100} {
      node[sloped,pos=\p*0.01]
      (N \p){}
    };
    \foreach \p in {30,50,70} {
  \draw[-latex, black, thick, shorten >= -0.5cm] (N \p.center) -- (N \p.south);
   };
%
  \draw[black]
  (E) arc [start angle=163,end angle=230,x radius={\rad*0.6-3.5},y radius=20mm]
    \foreach \p in {0,5,...,100} {
      node[sloped,pos=\p*0.01]
      (N3 \p){}
    };
        \foreach \p in {20} {
      \draw[-latex, black, thick, shorten >= -0.5cm] (N3 \p.center) -- (N3 \p.west);
    };
            \foreach \p in {60, 90} {
      \draw[-latex, black, thick, shorten >= -0.5cm] (N3 \p.center) -- (N3 \p.east);
    };    
    %
    %
  %
\begin{scope}[xslant=0.5,yshift=\rad,xshift=-2]
\filldraw[fill=gray!50,opacity=0]
  (-4,1) -- (3,1) -- (3,-1) -- (-4,-1) -- cycle; 
\end{scope}
\node[point] at (B) {};
\node[point] at (C) {};
\node[point] at (E) {};
\node[point] at (-0.62,-1.35) {};
\draw (B)node[above]{$\scriptstyle{A}$};
\draw (C)node[right]{$\scriptstyle{B}$};
\draw (E)node[left]{$\scriptstyle{C}$};
\draw (-0.7,-1.3)node[right]{$\scriptstyle{D}$};
\draw (RR)node[left]{$M$};
\end{tikzpicture}
\end{minipage}
\caption{\footnotesize{\textit{(left) A diagrammatic view of $M$ in the case of a sphere, with tangent space 
$\tau_x(M)$ which touches the manifold at $x$; the minimal geodesic $\gamma$ from $x$ to $x'$ which has the same 
length as the Euclidean distance $d$ in the tangent space from $x$ to $q$; and the inverse exponential map projection from the manifold $x' \in M$ to $q$ in the tangent space
\citep[adapted from][]{SDP}. (right) An illustration of parallel transport. The vector in the tangent space at $A$ is parallel transported 
along the geodesic to $B$, then along the geodesic to $C$ and then finally along the geodesic to $D$.
 }}}\label{tanfig}
\end{figure}

In more detail let $M$ be a complete and connected Riemannian manifold with induced Riemannian distance $d$ and denote by $\tau_x(M)$ the tangent space to $M$ at $x\in M$. 
The exponential map on $M$ at $x$ can be expressed by
\[\exp_x:\tau_x(M)\mapsto M;\,\,\,\exp_x(v)=\gamma(\|v\|),\]
where $\gamma$ is the 
geodesic such that $\gamma(0)=x$ and $\dot\gamma(0)=v/\|v\|$. {Since $\|\dot\gamma(0)\|=1$, $\gamma$ is usually referred to as a unit-speed geodesic.} Thus, when there is a unique shortest unit-speed geodesic $\gamma$ from $x$ to $x'$, the inverse of the exponential map can be expressed as 
\[\exp_x^{-1}(x')= d(x,x')\,\dot\gamma(0) = q \in \tau_x(M) ,\]
where $d(x,x')$ is the induced Riemannian distance between $x$ and $x'$.

Intuitive descriptions of the unrolling, unwrapping and wrapping procedures can be found in \cite{JK} for spherical data and \cite{KDL} for planar shape data. The basic idea can be understood by considering the manifold $M$ as a sphere with a 
base path $\gamma(t) \in [t_0, t_n]$ marked in wet ink on the sphere, and a base tangent plane which touches the sphere at the start point of the base path 
$\gamma(t_0)$.   The unrolling of 
the path $\gamma(t)^\dagger$ is the trace that the wet ink leaves on the tangent plane after rolling the tangent plane along the base path on the sphere without slipping or twisting so that at time $t$ the tangent plane touches 
the sphere at $\gamma(t)$. The unwrapping of a point on the sphere with respect to $\gamma$ is then a projection into the base tangent space 
that preserves direction and distance. The wrapping from the base tangent space to the manifold is the reverse of the unwrapping procedure. 
The unrolling, unwrapping and wrapping ideas are also appropriate when $M$ is a Riemannian manifold, although their construction is in 
general more complicated.

\begin{figure}[htbp]
\begin{center}
    {\includegraphics[trim={5cm 4cm 0cm 4cm}, clip,scale = 0.5]{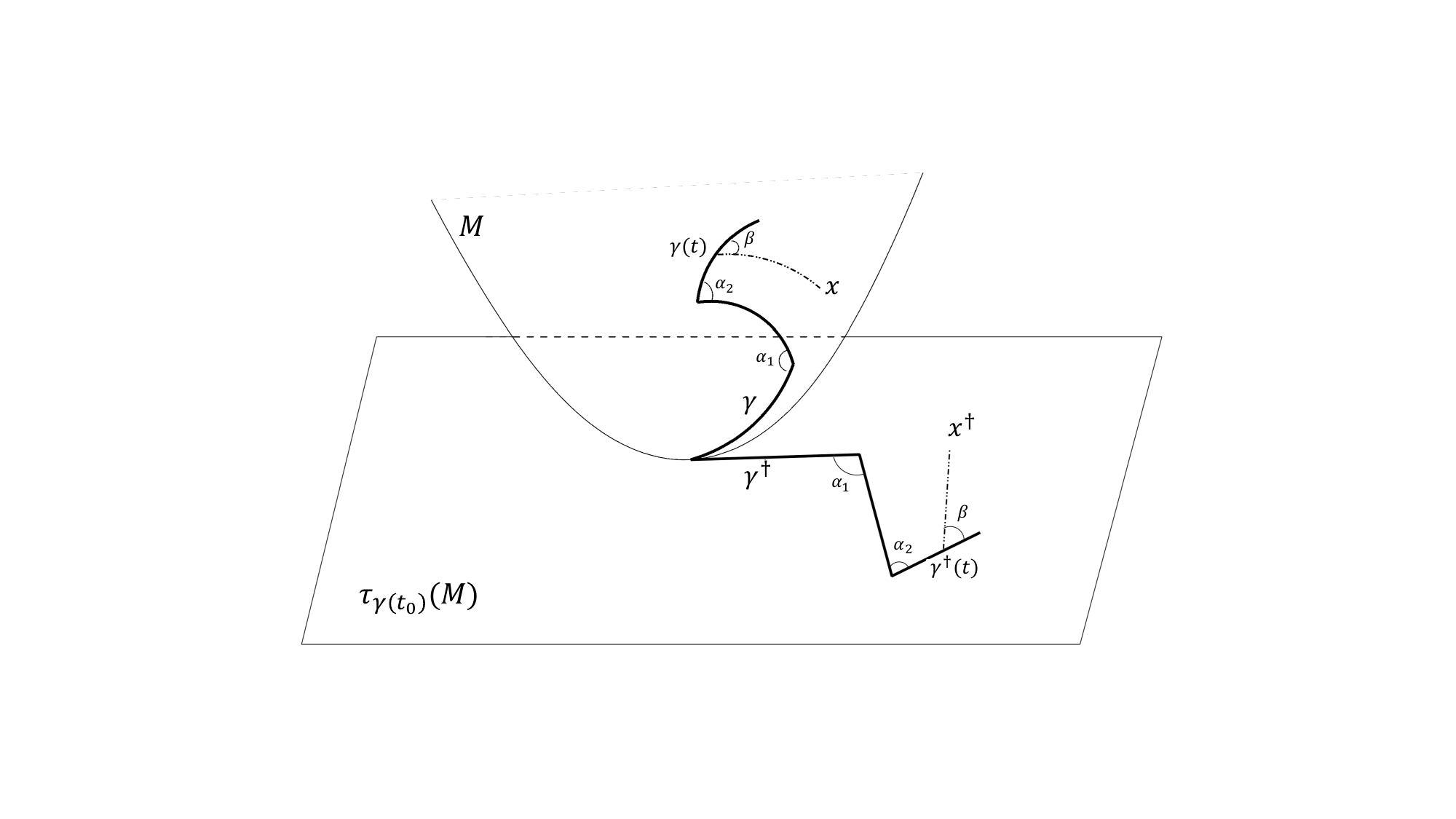}}   
   \end{center}
   \caption{A diagrammatic view of unrolling, unwrapping and wrapping \citep[adapted from][]{KDL}.} 
   \label{unroll}
\end{figure}

In Figure \ref{unroll} we provide a diagrammatic view of the unrolling, unwrapping and wrapping 
in the special case where $\gamma(t)$ is a continuous piecewise geodesic curve in a Riemannian manifold $M$.  
The piecewise geodesics in 
{$M$}, the piecewise linear paths
in tangent space and the angles are displayed.
In particular, the unrolling of a geodesic 
in $M$ on to the tangent space at its
initial 
{point} is {a} 
straight-line segment, starting from the origin{. This linear segment}  
is determined by the tangent vector to the geodesic at its initial point and {has} 
the same length as the
geodesic. 
The unrolling of a continuous piecewise 
geodesic 
on to
the tangent space at its initial 
{point} is a piecewise linear {path}, 
 starting from the origin{. The} 
first segment {of this piecewise linear path} has the same length as the first geodesic segment and its direction
is determined by the tangent vector to the first piece of the geodesic at its initial point{. The} 
lengths of the remaining segments {of this piecewise linear path} are the same as the corresponding parts of the
piecewise geodesic 
and the angles through which they turn are the same as those
of the corresponding parts of the piecewise geodesic. 
Moreover, the unwrapping
at time $t$, with respect to such a 
{piecewise geodesic} $\gamma$ of a point 
{$x$} is the point 
${x}^\dagger \in \tau_{\gamma(t_0)}(M)$ 
{determined as follows:} the length of 
${x}^\dagger - \gamma(t)^\dagger$ is the same as that of the geodesic from
$\gamma(t)$ to 
{$x$} and the angle that 
${x}^\dagger - \gamma(t)^\dagger$ makes with $\gamma^\dagger$
is the same as the angle that the geodesic from $\gamma(t)$ to 
{$x$} makes with 
$\gamma$.   Finally the wrapping at time $t$ is the reverse of the unwrapping, and maps $x^\dagger$ on  $\tau_{\gamma(t_0)}(M)$ to $x \in M$.

Formally,  unrolling, unwrapping and wrapping are closely linked to the concept of parallel transport along a curve. The latter is a method of transporting tangent vectors along smooth curves in a manifold and, in some sense, provides a method of isometrically moving the local geometry of a manifold along  curves. 
The Riemannian metric tensor $g$ defines a unique affine connection called the {Levi-Civita connection} $\nabla$, or covariant derivative, 
which enables us to say how vectors in tangent spaces change as we move along a curve from one point to another without slipping or twisting.   
Let $\gamma(t)$ be a curve on $M$ 
and we want to move from $\gamma(t_0)$ to $\gamma(s)$ and see how the vector $v(\gamma(t_0)) \in \tau_{\gamma(t_0)}(M)$ 
is transformed to in the new tangent space $\tau_{\gamma(s)}(M)$. The {parallel transport}  
of $v$ along $\gamma$ is a vector field $v(s)$ which satisfies a system of differential equations in (\ref{pteq}) which 
in general is solved numerically, although for
some manifolds the solution is analytic.     
The right plot of Figure \ref{tanfig}  illustrates the parallel transport of a vector along a continuous piecewise geodesic path. 
The vector in the tangent space at $A$ is parallel transported along the geodesic to $B$, then along the geodesic to $C$ and then finally along the geodesic to $D$.

More precisely, suppose that $\gamma(t)$, $0\leqslant t\leqslant t^*$ is a smooth curve in $M$ and that $v_0\in\tau_{\gamma(0)}(M)$, where $t^* > 0$ is a 
fixed constant. Then, the parallel transport of $v_0$ along $\gamma$ is the extension of $v_0$ to a vector field $v$ on $\gamma$ such that
\begin{equation}
\nabla_{\dot\gamma(t)} v(\gamma(t)) = 0,\qquad v(\gamma(0)) = v_0 .  \label{pteq}
\end{equation}
This provides linear isomorphisms between the tangent spaces at points along the curve $\gamma$:
\[P_{\gamma(s)}^{\gamma(t)} : \tau_{\gamma(t)}(M) \rightarrow \tau_{\gamma(s)}(M).\]
This isomorphism is known as the  {(Levi-Civita)} parallel transport map associated with the curve. Note that $\left(P_{\gamma(s)}^{\gamma(t)}\right)^{-1}=P_{\gamma(t)}^{\gamma(s)}$.

\begin{definition}
In terms of the parallel transport $P_{\gamma(s)}^{\gamma(t)}$  for a smooth curve $\gamma$ in $M$:
\begin{enumerate}
\item[$\bullet$] the {\bf unrolling} of $\gamma$ onto $\tau_{\gamma(0)}(M)$ is the curve $\gamma^\dagger$ on $\tau_{\gamma(0)}(M)$ such that $\gamma^\dagger(0)=0$ and 
\[\frac{d\gamma^\dagger(t)}{dt}=P_{\gamma(0)}^{\gamma(t)}\left(\dot\gamma(t)\right),\qquad 0\leqslant t\leqslant t^*;\]
\item[$\bullet$] the {\bf unwrapping} at $t\in[0,t^*]$, with respect to $\gamma$, of a point $x\in M$ into $\tau_{\gamma(0)}(M)$  {is the tangent vector at $\gamma(0)$ obtained as follows: first map $x$ to $\tau_{\gamma(t)}(M)$ using $\exp^{-1}_{\gamma(t)}$ to give $\exp^{-1}_{\gamma(t)}(x)$; then parallel translate it along $\gamma(t)$ back to $\tau_{\gamma(0)}$ to give the tangent vector $P^{\gamma(t)}_{\gamma(0)}(\exp^{-1}_{\gamma(t)}(x))$ at $\gamma(0)$; finally, the unwrapping at $t$ of $x\in M$ into $\tau_{\gamma(0)}(M)$ is the (Euclidean parallel) translation, within $\tau_{\gamma(0)}(M)$, of the tangent vector $P^{\gamma(t)}_{\gamma(0)}\left(\exp^{-1}_{\gamma(t)}(x)\right)$ to $\gamma^\dagger(t)$. In other words, the unwrapping at $t$ of $x\in M$ into $\tau_{\gamma(0)}(M)$} 
is the sum of the two tangent vectors in $\tau_{\gamma(0)}(M)$: $\gamma^\dagger(t)$ and the parallel transport of the tangent vector $\exp_{\gamma(t)}^{-1}(x)\in\tau_{\gamma(t)}(M)$ along $\gamma$ to $\tau_{\gamma(0)}(M)$, i.e. the unwrapping at $t$ of $x$, with respect to $\gamma$, is 
\[\gamma^\dagger(t)+P_{\gamma(0)}^{\gamma(t)}\left(\exp_{\gamma(t)}^{-1}(x)\right);\]
\item[$\bullet$] the {\bf wrapping} at $t$, with respect to $\gamma$, of a tangent vector $v\in\tau_{\gamma(0)}(M)$ back into $M$ is the reverse of the unwrapping procedure.
\end{enumerate}
\end{definition}

In particular, if $\gamma$ is a geodesic, then $\gamma^\dagger$ is the linear segment of the same length as $\gamma$ and in the same direction as the initial tangent vector of $\gamma$, that is, it is the linear segment from the origin of $\tau_{\gamma(0)}(M)$ to $\exp_{\gamma(0)}^{-1}(\gamma(t^*))$.

\subsection{Smoothing spline fitting on manifolds}
The goal in our work is to fit a smooth path to a dataset of points on a manifold that are recorded at times $t_0,t_1,\ldots,t_n$. 
The procedure involves a trade-off between fitting the observed data well and the curve being smooth. The fitted smooth path is called a smoothing spline and is useful for prediction, for example providing a smooth estimate of the trend in the presence of noisy data and for interpolating on the manifold at times between observed data points.

At first let us recall information about smoothing splines in Euclidean space, for example see {\citet{Greesilv94}. 
For $v_0, v_1, \ldots, v_n \in \R^d$ observed at times, $t_0, t_1, \ldots, t_n$ respectively, the cubic smoothing spline 
{estimation} for this dataset in $\R^d$ 
{seeks} the function{, or path,} $\widehat{f_\lambda^\dagger}:T\rightarrow\mathbb{R}^d$ that minimizes
\[\sum_{i = 0}^n \left\| v_i - f_\lambda^\dagger(t_i) \right\|^2+ \lambda \int_T \left\| (f_\lambda^\dagger)''(t) \right\|^2 dt\]
among all $\mathcal{C}^2$-functions, where $T = [t_0, t_n]$ and $\lambda > 0$ denotes a smoothing parameter {\citep{Greesilv94}.
The integrated squared norm of second derivative of the function is a measure of the roughness of the curve. 
Since the objective function is additive over each  component of $f_\lambda^\dagger$, this} 
minimization problem can be solved on each component ${f^\dagger}^{(j)}$ of $f^\dagger$ separately to reduce the $d$-dimensional minimization problem to $d$ one dimensional ones, so that
\begin{equation}\label{cubicspline}
 \widehat{{f_\lambda^\dagger}^{(j)}}= \argmin_{{f_\lambda^\dagger}^{(j)}}\sum_{i = 0}^n \left\{ v_i^{(j)} - {f_\lambda^\dagger}^{(j)} (t_i) \right\}^2+ \lambda \int_T \left | \left({f_\lambda^\dagger}^{(j)}\right)''(t) \right |^2 dt,
\end{equation} 
for $j = 1, \ldots,d$.
The solution to such a minimization problem is a cubic spline with knots at the unique values of $v_i$, while the smoothing parameter $\lambda$ controls the trade-off of the model complexity. When $\lambda$ tends to zero, $\widehat{f_\lambda^\dagger}$ becomes an interpolating cubic spline. On the other hand, $\widehat{f_\lambda^\dagger}$ becomes a classical simple linear regression line when $\lambda$ tends to infinity. 

Penalties with other functional forms could also be used and 
the knots do not need to coincide with the data points. Using $p \ge 1$ order derivatives in the penalty leads to a spline of order $2p-1$  \citep{Schoenberg64}. The integrated squared second derivative penalty function ($p=2$) used above leads to the cubic smoothing spline which often works well in practice. We also consider the $p=1$ case
\begin{equation}
\widehat{{f_{L,\lambda}^\dagger}^{(j)}}= \argmin_{{f_\lambda^\dagger}^{(j)}}\sum_{i = 0}^n \left\{ v_i^{(j)} - {f_\lambda^\dagger}^{(j)} (t_i) \right\}^2+ \lambda \int_T \left | \left({f_\lambda^\dagger}^{(j)}\right)'(t) \right |^2 dt ,  \label{linearspline}
\end{equation}
for $j=1,\ldots,d$. 
The solution is a linear spline for each component, which consists of a continuous piecewise linear function with possible jumps in the
slope at each knot.

{Turning to smoothing splines on manifolds, in} 
a similar fashion to that in \cite{JK} and \cite{KDL}, we use the concept of unrolling and unwrapping to define $M$-valued smooth splines as follows.

\begin{definition}
For a given dataset $\mathcal{D}=\{x_j\,:\, 0\leqslant j\leqslant n\}$ in $M$, where $x_j$ is observed at time $t_j$, and smoothing parameter $\lambda$, we define {\bf the $M$-valued smoothing spline} fitted to $\mathcal D$ with parameter $\lambda$ to be the $\mathcal{C}^2$-function
\[f(\cdot,\lambda) : [t_0, t_n] \rightarrow M\]
such that its unrolling $f^\dagger$ onto $\tau^{\phantom{A}}_{f(t_0,\lambda)}(M)$
is the cubic smoothing spline fitted to the data $\mathcal{D}^\dagger$ obtained by unwrapping $\mathcal D$ at times $t_j$, with respect to $f(\cdot,\lambda)$, into $\tau^{\phantom{A}}_{f(t_0,\lambda)}(M)$.
\label{def1}
\end{definition}

{Similar to spherical case in \cite{JK}, $f(\cdot,\lambda)$ is the smoothing spline with parameter $\lambda$ for $\mathcal D$ if, except at the data times, its unrolling $f^\dagger$ satisfies the equation that $d^4f^\dagger(t)/dt^4=0$, i.e. $f^\dagger$ is the Euclidean smoothing cubic spline with parameter $\lambda$ for $\mathcal D^\dagger$}.

{Note also that, when $\lambda$ tends to infinity, $f(\cdot,\lambda)$ becomes the geodesic segment obtained using the principal geodesic analysis developed in \cite{HZ} and \cite{Hucketal10}.}

To find the smoothing spline fitted to a given dataset $\mathcal D$, the iterative scheme for approximation given in \cite{JK} can be applied to general manifolds: take the piecewise geodesic passing through the data points as the initial curve $f_0$; at each step $\ell\geqslant0$, unwrap $\mathcal D$ with respect to $f_\ell$ to get $\mathcal{D}^\dagger_\ell$ in $\tau^{\phantom{A}}_{f_\ell(t_0)}(M)$; fit a Euclidean smoothing spline $f^\dagger_{\ell+1}$ to $\mathcal{D}^\dagger_\ell$ in $\tau^{\phantom{A}}_{f_\ell(t_0)}(M)$; then   {the curve} $f_{\ell+1}$ is {defined to be} the wrapped path, with respect to $f_\ell$, of $f^\dagger_{\ell+1}$ in $M$. The iterative procedure stops when $f_\ell$ and $f_{\ell+1}$ are sufficiently close.

In such an iterative scheme, smooth curves are approximated by piecewise geodesics. Thus, for the implementation in practice, it is sufficient to restrict ourselves to the procedures of unrolling a piecewise geodesic and of unwrapping and wrapping with respect to a piecewise geodesic. 

\begin{proposition}
For a given curve $\gamma(t)$, $t_0\leqslant t\leqslant t_n$, on $M$, which is a geodesic between $t_i$ and $t_{i+1}$, where $t_0<t_1<\cdots<t_n$, the above unrolling, unwrapping and wrapping procedures along $\gamma$ can be formulated as follows.
\begin{enumerate}
\item[$\bullet$] The {unrolling} of $\gamma$ onto $\tau_{\gamma(t_0)}(M)$ is the piecewise linear segment in the tangent space $\tau_{\gamma(t_0)}(M)$ joining the following successive points:
\[\gamma(t_i)^\dagger
\!=
\!\left\{
\begin{array}{lll}
\displaystyle
\!\!0 & \!\!\!\!\!\!\!\!\!\! i = 0
\\ \\
\!\!\displaystyle
\gamma(t_{i - 1})^\dagger
+P_{\gamma(t_0)}^{\gamma(t_1)} \circ P_{\gamma(t_1)}^{\gamma(t_2)} \circ \ldots \circ
P_{\gamma(t_{i - 2})}^{\gamma(t_{i - 1})}\left(\exp_{\gamma(t_{i-1})}^{-1}(\gamma(t_i))\right) &\\
&\!\!\!\!\!\!\!\!\!\!\!\!\!\!\!\!\!\!\!\!\!1 \leqslant i \leqslant n.
\end{array}
\right.\]
Note that
\begin{eqnarray*}
\gamma^\dagger(t)
&=&
\gamma(t_j)^\dagger
+ \frac{t - t_j}{t_{j + 1} - t_j} \left\{ \gamma(t_{j + 1})^\dagger - \gamma(t_j)^\dagger \right\}\qquad\text{for } t_i<t\leqslant t_{j+1}.
\end{eqnarray*}
\item[$\bullet$] The {unwrapping} at time $t$ of $x\in M$, with respect to $\gamma$, into $\tau_{\gamma(t_0)}(M)$ is the tangent vector $x^\dagger_t\in\tau_{\gamma(t_0)}(M)$ given by 
\begin{eqnarray*}
x^\dagger_t
\!\!\!&=&
\!\!\!\!\!\left\{
\begin{array}{ll}
\displaystyle
\exp_{\gamma(t_0)}^{-1} (x)
&\!\!\!\!\!\!\!\!\!\!\!\!\text{if }t = t_0
\\ \\
\displaystyle
\gamma(t)^\dagger
+P_{\gamma(t_0)}^{\gamma(t_1)} \circ P_{\gamma(t_1)}^{\gamma(t_2)} \circ \cdots \circ P_{\gamma(t_{j - 1})}^{\gamma(t_j)} \circ
P_{\gamma(t_j)}^{\gamma(t)} \left( \exp_{\gamma(t)}^{-1} (x) \right)
&\\
&\!\!\!\!\!\!\!\!\!\!\!\!\!\!\!\!\!\!\!\!\!\!\!\!\!\!\!\!\!\!\!\!\text{if }t_j < t\leqslant t_{j+1},
\end{array}
\right.
\end{eqnarray*}
as long as 
{there is a unique geodesic between $x$ and $\gamma(t)$}.
\item[$\bullet$]
The {wrapping} at time $t \in [t_0, t_n]$, along $\gamma$, of a tangent vector $v\in \tau_{\gamma(t_0)}(M)$ back into $M$ is the point $x_t\in M$ given by
\[x_t=\exp_{\gamma(t)}(v_t),\]
where $v_t$ is the tangent vector in $\tau_{\gamma(t)}(M)$ specified by
\[
v_t
=
\left\{
\begin{array}{ll}
\displaystyle
v&\text{if }  t = t_0
\\
\displaystyle
P_{\gamma(t)}^{\gamma(t_j)} \circ P_{\gamma(t_j)}^{\gamma(t_{j - 1})} \circ \cdots \circ P_{\gamma(t_1)}^{\gamma(t_0)}\left(v - \gamma(t)^\dagger \right)
&\text{if } t_j < t\leqslant t_{j+1}.
\end{array}
\right.\]
\end{enumerate}
\end{proposition}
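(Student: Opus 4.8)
My plan is to verify each of the three displayed expressions directly against the Definition given above, reducing everything to two ingredients already recorded in the excerpt. The first is that the velocity field of a geodesic is parallel along it, so that --- as noted in the remark following the Definition --- the unrolling of a single geodesic segment is the straight segment from the origin of the relevant tangent space to the image of its endpoint under $\exp^{-1}$. The second is the concatenation property of parallel transport along the fixed curve $\gamma$: for $a<b<c$ one has $P_{\gamma(a)}^{\gamma(c)}=P_{\gamma(a)}^{\gamma(b)}\circ P_{\gamma(b)}^{\gamma(c)}$, together with the inverse relation $\bigl(P_{\gamma(a)}^{\gamma(b)}\bigr)^{-1}=P_{\gamma(b)}^{\gamma(a)}$ already stated in the text. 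Given these, the three formulas reduce to bookkeeping.

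For the \textbf{unrolling} I would argue one subinterval at a time. On $[t_i,t_{i+1}]$ the curve is a geodesic, so $P_{\gamma(t_i)}^{\gamma(t)}(\dot\gamma(t))$ equals the constant right-hand velocity $\dot\gamma(t_i^+)$; composing with $P_{\gamma(t_0)}^{\gamma(t_i)}$ shows that $t\mapsto P_{\gamma(t_0)}^{\gamma(t)}(\dot\gamma(t))$ is constant there, so integrating the defining relation $\dot\gamma^\dagger(t)=P_{\gamma(t_0)}^{\gamma(t)}(\dot\gamma(t))$ yields a linear piece with increment $(t_{i+1}-t_i)\,P_{\gamma(t_0)}^{\gamma(t_i)}(\dot\gamma(t_i^+))$. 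Since the geodesic piece runs from $\gamma(t_i)$ to $\gamma(t_{i+1})$ over a parameter interval of length $t_{i+1}-t_i$, we have $\exp_{\gamma(t_i)}^{-1}(\gamma(t_{i+1}))=(t_{i+1}-t_i)\,\dot\gamma(t_i^+)$, so the increment is $P_{\gamma(t_0)}^{\gamma(t_i)}\bigl(\exp_{\gamma(t_i)}^{-1}(\gamma(t_{i+1}))\bigr)$; expanding $P_{\gamma(t_0)}^{\gamma(t_i)}$ by the concatenation identity and summing gives the recursion for the vertices $\gamma(t_i)^\dagger$, while linearity on each piece gives the displayed interpolation formula for $\gamma^\dagger(t)$. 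One can note in passing that this piecewise-linear curve has the lengths and turning angles described informally above, since each $P$ is a linear isometry.

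For the \textbf{unwrapping} I would substitute the split form $P_{\gamma(t_0)}^{\gamma(t)}=P_{\gamma(t_0)}^{\gamma(t_1)}\circ\cdots\circ P_{\gamma(t_{j-1})}^{\gamma(t_j)}\circ P_{\gamma(t_j)}^{\gamma(t)}$ into the Definition's expression $\gamma^\dagger(t)+P_{\gamma(t_0)}^{\gamma(t)}\bigl(\exp_{\gamma(t)}^{-1}(x)\bigr)$, obtaining exactly $x^\dagger_t$; the $t=t_0$ case follows because $\gamma^\dagger(t_0)=0$ and $P_{\gamma(t_0)}^{\gamma(t_0)}$ is the identity, and $\exp_{\gamma(t)}^{-1}(x)$ is defined precisely under the stated cut-locus condition. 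For the \textbf{wrapping} I would use that it is, by definition, the inverse of unwrapping: $x_t$ is the point whose unwrapping at $t$ equals $v$, so solving $\gamma^\dagger(t)+P_{\gamma(t_0)}^{\gamma(t)}\bigl(\exp_{\gamma(t)}^{-1}(x_t)\bigr)=v$ and applying $\bigl(P_{\gamma(t_0)}^{\gamma(t)}\bigr)^{-1}=P_{\gamma(t)}^{\gamma(t_0)}=P_{\gamma(t)}^{\gamma(t_j)}\circ\cdots\circ P_{\gamma(t_1)}^{\gamma(t_0)}$ gives $\exp_{\gamma(t)}^{-1}(x_t)=v_t$ with $v_t$ as displayed, hence $x_t=\exp_{\gamma(t)}(v_t)$; the $t=t_0$ case collapses to $v_t=v-\gamma^\dagger(t_0)=v$.

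The points requiring care, rather than the algebra itself, will be: justifying the concatenation identity for parallel transport across the nodes $t_i$, where $\gamma$ is only continuous and not $C^1$ (one invokes that parallel transport is well defined for continuous piecewise-$C^1$ curves and composes across the break points); keeping the one-sided velocities $\dot\gamma(t_i^+)$ straight; and carrying the standing assumption that consecutive data points are joined by unique minimizing geodesics, so that every $\exp^{-1}$ that appears is defined --- this last point being exactly what the cut-locus caveat in the unwrapping statement records.
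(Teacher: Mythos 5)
The paper states this proposition without any proof, treating it as an immediate unwinding of the Definition, and your verification is correct: it supplies exactly the bookkeeping the paper leaves implicit (parallelism of the velocity field along each geodesic piece, the concatenation and inversion identities for $P$, and the identity $\exp_{\gamma(t_i)}^{-1}(\gamma(t_{i+1}))=(t_{i+1}-t_i)\,\dot\gamma(t_i^{+})$). Your closing caveats about one-sided velocities at the knots and the cut-locus/uniqueness assumptions are also the right ones to flag.
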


\vskip 0.5cm
It is clear that, to be able to carry out the unrolling, unwrapping and wrapping procedures in practice, the crucial steps are to find the expressions for the exponential map, as well as its inverse, and for the parallel transport along a geodesic. 

\vskip 0.5cm

{\bf Example:} The expressions for unrolling, unwrapping and wrapping are 
known in the case of the unit sphere $S^d$:  
\begin{enumerate}
\item[$\bullet$] a unit speed geodesic starting from $x\in S^d$ takes the form 
\begin{eqnarray}
\gamma(t)=x\,\cos(t)+v\,\sin(t),
\label{eqn0b}
\end{eqnarray}
where $v\in\mathbb{R}^{d+1}$ such that $\|v\|=1$ and $\langle x,v\rangle=0$; 
\item[$\bullet$] the exponential map at $x$ has the expression 
\begin{eqnarray}
\exp_x(v)=x\,\cos(\|v\|)+\frac{v}{\|v\|}\,\sin(\|v\|)
\label{eqn0d}
\end{eqnarray}
for any $v\in\mathbb{R}^{d+1}\setminus\{0\}$ such that $\langle x,v\rangle=0$; 
\item[$\bullet$] the inverse exponential at $x$ is given by
\[\exp_x^{-1}(x')=\frac{x'-\langle x,x'\rangle x}{\|x'-\langle x,x'\rangle x\|}\arccos(\langle x,x'\rangle)\]
for all $x'\in S^d\setminus\{-x\}$; 
\item[$\bullet$] the parallel transport of the tangent vector $w\in\tau_{\gamma(t_0)}(S^d)$ along the geodesic $\gamma$ defined by \eqref{eqn0b} is the vector field $w(t)$ along $\gamma(t)$ given by
\[w(t)=w-\langle w,v\rangle(v-\dot\gamma(t)),\]
where $v=\dot\gamma(0)$, as $w(t)\in\tau_{\gamma(t)}(S^d)$ and as the covariant derivative of $w(t)$ along $\gamma$ is $\nabla_{\dot\gamma(t)}w(t)=\dot w(t)-\langle\dot w(t),\gamma(t)\rangle\gamma(t)$.
\end{enumerate} 

See Figure \ref{tanfig} for illustrations of these concepts for $S^2$.

\section{Smoothing splines in shape space}\label{Sec3}

The method in the previous section provides a general framework for fitting smoothing splines on manifolds, and we now investigate the particular case
of fitting smoothing splines in shape space. 
Recall that, for a configuration in $\mathbb{R}^m$ with $k(>m)$ labelled landmarks, its pre-shape is what is left after the effects of translation and scaling are removed and that this pre-shape can be represented by an $m\times(k-1)$ matrix $X$ with  $\|X\|^2=\hbox{tr}(XX^\top)=1$ 
\citep{Kendall84,KBCL}. The space $\mathcal{S}_m^k$ of such pre-shapes is known as the pre-shape space of configurations of $k$ labelled landmarks in $\mathbb{R}^m$ and is the unit sphere of dimension $m(k-1)-1$, i.e. $\mathcal{S}_m^k=S^{m(k-1)-1}$. The Kendall shape space $\Sigma^k_m$ of configurations with $k$ labelled landmarks in $\mathbb{R}^m$ is then the quotient space of $\mathcal{S}^k_m$ by the rotation group $SO(m)$ acting on the left \citep{KBCL}. We shall use $[X]$ to denote the shape of the pre-shape $X$.

Unfortunately, due to their non-trivial geometric structure, the direct implementation on the Kendall shape spaces for configurations in $\mathbb{R}^m$ ($m\geqslant3$) of the exponential map and, in particular, of the parallel transport turns out to be challenging \citep{KBCL}. {This in turn makes the practical implementation of the spline fitting idea proposed in the previous section difficult.} One possible way to overcome this difficulty is to explore the fact that Kendall shape spaces are the quotient spaces of Euclidean spheres and to use the much simpler structure of spheres, as has previously been done by many in various different statistical investigations in shape analysis. For example, \cite{Le03} obtained an equivalent, qualitative description of the parallel transport on shape spaces via the pre-shape sphere. In the case of shapes of configurations in $\mathbb{R}^2$, this description leads successfully to a closed expression for such an equivalent notion on the pre-shape sphere, which has then been used for statistical analysis of shape in \cite{KDL}. Unfortunately, as pointed out in \cite{Le03}, such a closed expression is generally unavailable on the shape space of configurations in $\mathbb{R}^m$ for $m\geqslant3$.

In Section \ref{algorithm}, we will first provide an overview of the shape spline fitting algorithm. The full technical details of the algorithm require 
further developments, and in Section \ref{background} first we summarize facts about the tangent space and geodesics on the shape space of labelled 
configurations in $\mathbb{R}^m$. Then in Section \ref{ParallelTransport} we extend \cite{Le03}'s result which allows us to carry out parallel transport for such shape 
spaces, and hence enables us to implement the shape smoothing splines. We conclude the section with some results for size-and-shape space. 

\subsection{Shape spline fitting algorithm}\label{algorithm}
We aim to construct an algorithm to calculate the shape-space spline, or simply the shape spline, for a given set of configurations in $\mathbb{R}^m, m \ge 3$. 
The basic idea of the spline fitting algorithm is similar to that outlined by \cite{JK} for a sphere and \cite{KDL} for planar shape space. 
We start with a base path consisting of a piecewise geodesic path through the objects on the manifold. 
We then unroll and unwrap the data with respect to the base path to the tangent space at the start point of the base path.
We fit a smoothing spline in that tangent space and then wrap the spline interpolated points back to the manifold, which are used as the new piecewise geodesic base path in the next iteration. 
The procedure is repeated until there is little difference between the new and old base paths.  
An overview of the shape spline fitting algorithm is given here:

\begin{algorithm} 
\begin{enumerate}
$\;$ 
\item For a time series of $n+1$ configurations carry out registration of successive pre-shapes using ordinary Procrustes analysis to remove relative rotation information. 
\item Construct $G$ time points between each pair of observed times, which will be used for interpolation.
\item Take the piecewise horizontal geodesic path through the pre-shapes as an initial base path.
\item Unroll and unwrap the data with respect to the base path to the horizontal tangent space at the start point of the base path.
\item Fit a cubic smoothing spline to the data in that horizontal tangent space, with smoothing parameter $\lambda$.
\item Wrap the interpolated data back to the pre-shape sphere (consisting of $(G+1)n+1$ interpolated objects).
\item Register the interpolated objects using Procrustes analysis. The piecewise geodesic through these objects is the new base path.
\item If the maximum distance between the interpolated objects on the old and new base bath is greater than a threshold $\varepsilon>0$, then register the original objects using Procrustes analysis and go to 4.
Otherwise stop. 
\end{enumerate}
\end{algorithm}

We provide a more detailed and technical description of the algorithm in the Appendix, which requires the technical developments in Sections \ref{background} and \ref{ParallelTransport}. 

The linear spline is fitted to the data using the same iterative algorithm except that in step 5 we fit a linear spline rather than the cubic smoothing spline (\ref{cubicspline}). Also, for the 
single geodesic case ($\lambda \to \infty$) again the same iterative algorithm is used. 

{Note that for $m\geqslant3$ and $k\geqslant m+1$, while Kendall shape space $\Sigma_m^k$ is a complete metric space, as a Riemannian manifold it has a singular subset comprising the shapes whose pre-shape matrices have ranks at most $m-2$. Thus, if some data points are close to the singularity set, some extra care may be required in the above step 6 to check whether the ranks of any of the data matrices at any stage are less than, or equal to, $m-2$. If this happens, the proposed algorithm fails. Nevertheless, the singularity set has a high co-dimension and so is `tiny'. Hence, the possibility for this to happen is small in applications.}

Our data analyses have been implemented in \texttt{R} \citep{RDevel} and routines are available at
\begin{center}
\begin{small}
{\tt https://www.maths.nottingham.ac.uk/plp/pmzild/papers/splines3D}
\end{small}
\end{center}
The function {\tt smooth.spline} is used for fitting the cubic smoothing splines; the function {\tt elspline} in the package {\tt lspline} \citep{lspline17} is
used for fitting linear splines; 
 and the package {\tt shapes} is used for the Procrustes analysis and the relevant shape functions \citep{Dryden-shapes}.

The choice of the smoothing parameter $\lambda$, adapted to the data, can be determined by applying the (Euclidean) leave-one-out cross-validation method (\cite{ET93}, Chapter 17) to the unrolled shape spline and the unwrapped shape data. That is, we chose the optimal $\lambda$ which minimizes   
\begin{eqnarray*}\label{eq:cv}
CV(\lambda)=\frac{1}{n}\sum_{i = 1}^n \left\| v_i - \widehat{f_{\lambda, -i}^\dagger} (t_i)\right\|^2,
\end{eqnarray*}
where $\widehat{f_{\lambda, -i}^\dagger}$ is the unrolling of $\widehat f_{\lambda,-i}$ to the tangent space at its starting point, $\widehat f_{\lambda,-i}$ is the shape smoothing spline with the $i$th observation excluded and $v_i$ is the unwrapped $i$th observed shape data with respect to $\widehat f_{\lambda,-i}$.
 In several examples that we have considered, the spline fitting algorithm 
converges for the $\lambda$ chosen by cross-validation. 
However, the algorithm may not converge on some occasions, particularly if an inappropriate choice of $\lambda$ is made, and so setting an upper limit on the number of 
iterations (e.g. 20) is helpful in practice.

\subsection{Shape space, tangent space and shape geodesics}\label{background}
Writing $M(m,k-1)$ for the space of $m\times(k-1)$ real matrices, then the tangent space to the pre-shape sphere $\mathcal{S}_m^k$ at $X\in\mathcal{S}_m^k$ is
\[\tau^{\phantom{A}}_X(\mathcal{S}_m^k)=\{V\in M(m,k-1)\,:\,\hbox{ tr}(XV^\top)=0\}.\]
The horizontal subspace, which is the same as the Procrustean tangent space, of $\tau^{\phantom{A}}_X(\mathcal{S}_m^k)$, with respect to the quotient map from $\mathcal{S}_m^k$ to the shape space $\Sigma_m^k$, can be expressed as
\[\H_X(\S_m^k) =\{V\in M(m,k-1)\,:\,\tr(XV^\top) = 0, XV^\top \mbox{ is symmetric} \}.\]
The horizontal subspace $\H_X(\S_m^k)$ is isometric to the tangent space $\tau^{\phantom{A}}_{[X]}(\Sigma_m^k)$ to $\Sigma_m^k$ at the shape $[X]$ of $X$, and this gives a useful isometric representation of the tangent space of $\Sigma_m^k$ at $[X]$ for the statistical analysis of shapes \citep{KBCL}. {Note that the shape of a configuration becomes singular when the rank of the configuration matrix is less than $m-1$. To apply the results and methodology of the previous section, we have in fact implicitly restricted ourselves to situations where shapes are non-singular as is the case for most applications. We shall continue to make this restriction. Since the set of singular shapes has a high co-dimension and geodesics between any two non-singular shapes never pass through singular shapes, this restriction is relatively mild.}
 
 A unit-speed geodesic $\gamma$ in the shape space starting from $[X]$ can be isometrically represented by a so-called horizontal geodesic in $\S_m^k$ of the form
\[\gamma(s)=X\cos s+ V\sin s,\qquad s\in[0,\pi/2),\]
where $V\in\H_X(\S_m^k)$ and $\|V\|=1$. The path $\gamma$ is usually referred to as the horizontal lift of $\gamma$. To obtain a representation of a shortest geodesic between two shapes $[X_1]$ and $[X_2]$, we take the two pre-shapes $X_1$ and $X_2$ of $[X_1]$ and $[X_2]$ respectively such that $X_1X_2^\top$ is symmetric and all its eigenvalues are non-negative except possibly for $\lambda_m$, the smallest one, where $\hbox{sign}(\lambda_m)=\hbox{sign}(\hbox{det}(X_1X_2^\top))$. Then, a unit-speed shortest geodesic from $[X_1]$ to $[X_2]$ can be isometrically represented by the horizontal geodesic from $X_1$ to $X_2$: 
\begin{eqnarray}
\gamma(s)=X_1 \cos s + V_{X_1,X_2} \sin s,\qquad s \in [0,s_0],
\label{eqn0}
\end{eqnarray}
where
\[V_{X_1,X_2}=\frac{1}{\sin s_0}\{X_2 - X_1 \cos s_0\}\in\H_{X_1}\]
and $s_0 = \rho([X_1], [X_2])$ is the Riemannian shape distance between $[X_1]$ and $[X_2]$. Note that $\|V_{X_1,X_2}\|=1$.

Thus, it follows from the expression \eqref{eqn0} for the horizontal geodesic from $X_1$ to $X_2$ that, if there is a unique geodesic between $[X_1]$ and $[X_2]$, the inverse exponential map $\exp_{[X_1]}^{-1}([X_2])$ on the shape space $\Sigma_m^k$ can be isometrically represented by its horizontal lift on $\S_m^k$ given by
\begin{eqnarray}
\exp_{X_1}^{-1}(X_2)=s_0\,V_{X_1,X_2}.
\label{eqn0c}
\end{eqnarray}

\subsection{Parallel transport}\label{ParallelTransport}
Turning to parallel transport, it is shown in \cite{Le03} that, along a horizontal geodesic $\gamma(s)$ in $\S_m^k$, a vector field $V(s)$ is horizontal and its projection to $\tau^{\phantom{A}}_{[\gamma(s)]}(\Sigma_m^k)$ is the parallel transport, along the shape geodesic $[\gamma(s)]$, of the projection of $V(0)$ to $\tau^{\phantom{A}}_{[\gamma(0)]}(\Sigma_m^k)$ if and only if $V(s)$ satisfies the following three conditions:
\begin{eqnarray}
&&\hbox{tr}(V(s)\,\gamma(s)^\top)=0\label{eqn3}\\
&&V(s)\,\gamma(s)^\top=\gamma(s)\,V(s)^\top\label{eqn4}\\
&&\dot V(s)-\hbox{tr}(\gamma(s)\,\dot V(s)^\top)\,\gamma(s)=A(s)\,\gamma(s)\quad\hbox{ for some $A(s)$ such} \label{eqn5}\\
&&\qquad\qquad\qquad\qquad\qquad\qquad\qquad\qquad\qquad\hbox{ that }A(s)=-A(s)^\top.\nonumber
\end{eqnarray}
{Thus, the solution $V(s)$ to \eqref{eqn3}, \eqref{eqn4} and \eqref{eqn5} provides an isometric representation, on $\mathcal S_m^k$, of the parallel transport $P^{[\gamma(0)]}_{[\gamma(s_0)]}$ along $[\gamma(s)]$, $0\leqslant s\leqslant s_0$:
\begin{eqnarray}
\Psi_{\gamma(s_0)}^{\gamma(0)}:\,V(0)\mapsto V(s_0).
\label{eqn5a}
\end{eqnarray}
}
The usefulness of this result for practical implementation lies crucially in obtaining a more quantitative description of the skew-symmetric matrix $A(s)$ in \eqref{eqn5}, which {we derive with} the following result.  

\begin{proposition}\label{Prop2}
Let $\gamma(s)$, $0\leqslant s\leqslant s_0$, be a given horizontal $\mathcal{C}^1$-curve in ${\mathcal S}_m^k$ and $V$ be a given horizontal tangent vector in $\tau^{\phantom{A}}_{\gamma(0)}({\mathcal S}_m^k)$. Assume that the rank of $\gamma(s)$ is at least $m-1$, except for at most finitely many $s$. Then, the vector field $V(s)$ along $\gamma(s)$ is horizontal and the projection of $V(s)$ to $\tau^{\phantom{A}}_{[\gamma(s)]}(\Sigma_m^k)$ is the parallel transport, along the shape curve $[\gamma(s)]$, of the projection of $V$ to $\tau^{\phantom{A}}_{[\gamma(0)]}(\Sigma_m^k)$ if and only if $V(s)$ is the solution of
\begin{eqnarray}
\begin{array}{rcl}
\dot V(s)&=&-{\rm tr}(\dot\gamma(s)\,V(s)^\top)\,\gamma(s)+A(s)\,\gamma(s),\qquad s\in[0,\,s_0],\\
V(0)&=&V,
\end{array}
\label{eqn1}
\end{eqnarray}
where $A(s)$ is skew-symmetric and is the unique solution to
\begin{eqnarray}
A(s)\,\gamma(s)\,\gamma(s)^\top+\gamma(s)\,\gamma(s)^\top A(s)=\dot\gamma(s)\,V(s)^\top-V(s)\,\dot\gamma(s)^\top.
\label{eqn2}
\end{eqnarray}
\label{prop1}
\end{proposition}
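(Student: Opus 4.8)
The plan is to derive Proposition~\ref{Prop2} directly from the three conditions \eqref{eqn3}, \eqref{eqn4} and \eqref{eqn5} of \citet{Le03} quoted above, by showing that, for a horizontal initial vector $V(0)=V$, the conjunction of \eqref{eqn3}--\eqref{eqn5} is equivalent to the initial value problem \eqref{eqn1}--\eqref{eqn2}. Two elementary facts are used throughout: since $\Gamma(s)$ lies on the pre-shape sphere, $\tr\bigl(\Gamma(s)\Gamma(s)^\top\bigr)=1$; and $\tr\bigl(A(s)\Gamma(s)\Gamma(s)^\top\bigr)=0$ for every skew-symmetric $A(s)$, because $\Gamma(s)\Gamma(s)^\top$ is symmetric and the trace of a product of a skew-symmetric and a symmetric matrix vanishes.

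For the forward implication, assume \eqref{eqn3}--\eqref{eqn5}. Differentiating the scalar identity \eqref{eqn3} gives $\tr(\dot V\Gamma^\top)=-\tr(\dot\Gamma V^\top)$, and since $\tr(\Gamma\dot V^\top)=\tr(\dot V\Gamma^\top)$, substituting this into \eqref{eqn5} turns it into precisely the ODE in \eqref{eqn1}. Differentiating the symmetry identity \eqref{eqn4} yields $\dot V\Gamma^\top-\Gamma\dot V^\top=\dot\Gamma V^\top-V\dot\Gamma^\top$; inserting \eqref{eqn1} and using $A^\top=-A$ collapses the left-hand side to $A\Gamma\Gamma^\top+\Gamma\Gamma^\top A$, which is exactly \eqref{eqn2}. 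It then remains to see that \eqref{eqn2} pins down $A(s)$ uniquely among skew-symmetric matrices: writing $P=\Gamma\Gamma^\top$ and diagonalising the symmetric positive semi-definite matrix $P$, the linear map $A\mapsto AP+PA$ multiplies the $(i,j)$-entry by $\mu_i+\mu_j$, where the $\mu_i\geqslant0$ are the eigenvalues of $P$; the hypothesis $\mathrm{rank}\,\Gamma\geqslant m-1$ forces at most one $\mu_i$ to vanish, so $\mu_i+\mu_j>0$ whenever $i\neq j$, and since a skew-symmetric matrix has zero diagonal this map is an isomorphism of the space of skew-symmetric matrices; as the right-hand side of \eqref{eqn2} is skew-symmetric, there is exactly one skew-symmetric solution $A(s)$.

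For the converse, suppose $V(s)$ solves \eqref{eqn1} with $A(s)$ the unique skew-symmetric solution of \eqref{eqn2} and $V(0)=V$ horizontal. Put $f(s)=\tr\bigl(V(s)\Gamma(s)^\top\bigr)$; using \eqref{eqn1} together with $\tr(\Gamma\Gamma^\top)=1$ and $\tr(A\Gamma\Gamma^\top)=0$ one finds $\dot f\equiv0$, and $f(0)=0$ by horizontality, so \eqref{eqn3} holds. Put $G(s)=V(s)\Gamma(s)^\top-\Gamma(s)V(s)^\top$; substituting \eqref{eqn1} reduces $\dot G$ to $A\Gamma\Gamma^\top+\Gamma\Gamma^\top A-\bigl(\dot\Gamma V^\top-V\dot\Gamma^\top\bigr)$, which vanishes by \eqref{eqn2}, and $G(0)=0$ again by horizontality, so \eqref{eqn4} holds. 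Finally, since \eqref{eqn3} now holds, reversing the computation of the first step shows that \eqref{eqn1} implies \eqref{eqn5}. Hence all three conditions of \citet{Le03} hold, which is exactly the assertion of the proposition. Existence and uniqueness of $V(s)$ itself is not a difficulty: eliminating $A(s)$ via $A(s)=\mathcal{L}_{P(s)}^{-1}\bigl(\dot\Gamma V^\top-V\dot\Gamma^\top\bigr)$, with $\mathcal{L}_P(A)=AP+PA$, turns \eqref{eqn1} into a linear ODE in $V$ with continuous coefficients on any subinterval where $\mathrm{rank}\,\Gamma\geqslant m-1$.

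The step I expect to be the crux is the algebraic equation \eqref{eqn2}: the interplay between the rank of $\Gamma(s)$ and the invertibility of $\mathcal{L}_{P(s)}$ on skew-symmetric matrices, and the treatment of the finitely many instants at which $\mathrm{rank}\,\Gamma(s)$ may drop below $m-1$. At such instants $\mathcal{L}_{P(s)}$ can be singular and $A(s)$ need not be determined there; I would handle this by first establishing the equivalence on the open set where $\mathrm{rank}\,\Gamma\geqslant m-1$, and then arguing that the solution $V(s)$ of the linear ODE extends continuously across the isolated exceptional instants, so that the closed conditions \eqref{eqn3} and \eqref{eqn4} persist there by continuity and \eqref{eqn5} is recovered on the whole of $[0,s_0]$. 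Every remaining step is a short matrix computation.
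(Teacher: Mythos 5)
Your proposal is correct and follows essentially the same route as the paper: differentiate the three conditions of Le (2003) to obtain the ODE and the Sylvester-type equation, prove uniqueness of the skew-symmetric $A(s)$ by diagonalising $\Gamma\Gamma^\top$ and using the rank hypothesis, and reverse the computation for the converse. Your additional remark on extending the solution across the finitely many instants where the rank drops is a small refinement the paper leaves implicit, but it does not change the argument.
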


\begin{proof}
Note first that $V\in\mathcal{H}_{\gamma(0)}({\mathcal S}_m^k)$ implies that $\hbox{tr}(V\gamma(0)^\top)=0$ and $V\gamma(0)^\top$ is a symmetric matrix.

Assume that $V(s)$ satisfies \eqref{eqn3}, \eqref{eqn4} and \eqref{eqn5}. Since tr$(V(0)\,\gamma(0)^\top)=0$, condition \eqref{eqn3} holds for all $s\in[0,\,s_0]$ if and only if
\begin{eqnarray}
\hbox{tr}(\dot V(s)\,\gamma(s)^\top)+\hbox{tr}(V(s)\,\dot\gamma(s)^\top)=0,\qquad s\in[0,\,s_0].
\label{eqn6}
\end{eqnarray}
Thus, under conditions \eqref{eqn3} and \eqref{eqn5}, $V(s)$ satisfies \eqref{eqn1} for some skew-symmetric matrix $A(s)$. Since $V(0)\,\gamma(0)^\top=\gamma(0)\,V(0)^\top$, condition \eqref{eqn4} holds if and only if
\[\dot V(s)\,\gamma(s)^\top+V(s)\,\dot\gamma(s)^\top=\dot\gamma(s)\,V(s)^\top+\gamma(s)\,\dot V(s)^\top,\]
which is equivalent to
\begin{eqnarray}
\dot V(s)\,\gamma(s)^\top-\gamma(s)\,\dot V(s)^\top=\dot\gamma(s)\,V(s)^\top-V(s)\,\dot\gamma(s)^\top.
\label{eqn7}
\end{eqnarray}
Hence, if \eqref{eqn5} holds for some $A(s)$ such that $A(s)=-A(s)^\top$, $A(s)$ must satisfy
\[\dot V(s)\,\gamma(s)^\top-\gamma(s)\,\dot V(s)^\top=A(s)\,\gamma(s)\,\gamma(s)^\top+\gamma(s)\,\gamma(s)^\top A(s).\]
Thus, conditions \eqref{eqn4} and \eqref{eqn5} together imply that $A(s)$ must satisfy \eqref{eqn2}. 

The uniqueness of the solution to \eqref{eqn2} follows from the fact that, for a given $m\times m$ symmetric non-negative definite matrix $S=R\Lambda R^\top$ of rank at least $m-1$, where $R\in{\bf O}(m)$ and $\Lambda=\hbox{diag}\{\lambda_1,\ldots,\lambda_m\}$, and a given skew-symmetric matrix $B$, there is a unique skew-symmetric matrix $A$ satisfying the equation $AS+SA=B$. To see the latter, write $\tilde A=(\tilde a_{ij})=R^\top AR$ and $\tilde B=(\tilde b_{ij})=R^\top BR$. Then, $AS+SA=B$ if and only if $\tilde A\Lambda+\Lambda\tilde A=\tilde B$, which is if and only if $\lambda_i\tilde a_{ij}+\lambda_j\tilde a_{ij}=\tilde b_{ij}$ for $i<j$, i.e. if and only if $\tilde a_{ij}=\tilde b_{ij}/(\lambda_i+\lambda_j)$. 

On the other hand, if $V(s)$ is the solution to \eqref{eqn1} with $A(s)$ being determined by \eqref{eqn2}, then $V(s)$ satisfies \eqref{eqn7} and so \eqref{eqn4} holds. However, tr$(A(s)\,\gamma(s)\,\gamma(s)^\top)=\hbox{tr}\left(\gamma(s)\,\gamma(s)^\top A(s)\right)$, so that condition \eqref{eqn2} implies that we must have 
\[\hbox{tr}\left(A(s)\,\gamma(s)\,\gamma(s)^\top\right)=0.\]
This, together with \eqref{eqn7}, shows that $V(s)$ also satisfies \eqref{eqn6}. Hence, both \eqref{eqn3} and \eqref{eqn5} hold.
\end{proof}

Note that $V(s)$ given in Proposition \ref{prop1} is generally not the parallel transport of $V$ along $\gamma(s)$ in ${\mathcal S}_m^k$. The reason for this is that, in terms of the covariant derivative $\nabla$ on ${\mathcal S}_m^k$, one requires $\nabla_{\dot\gamma(s)}V(s)=0$ for $V(s)$ to be the parallel transport of $V(0)$ along $\gamma$ in ${\mathcal S}_m^k$, while one only needs $\nabla_{\dot\gamma(s)}V(s)$ to be orthogonal to the horizontal subspace ${\mathcal H}_{\gamma(s)}({\mathcal S}_m^k)$, as explained in \cite{Le03}, for the projection of $V(s)$ to be the parallel transport along $[\gamma(s)]$ of the projection of $V(0)$. The latter is the one that we require.

To apply the above results to the parallel transport of the projection of $V(s_0)$ along $[\gamma(s)]$, $0\leqslant s\leqslant s_0$, back to $\tau^{\phantom{A}}_{[\gamma(0)]}(\Sigma_m^k)$ that we require for the unrolling and unwrapping procedures, we re-parametrize the curve $\gamma(s)$ and the vector field $V(s)$ to $\tilde\gamma(s)=\gamma(s_0-s)$ and $\tilde V(s)=V(s_0-s)$, say, respectively. With this modification in mind,  the result of Proposition \ref{prop1}{, in particular  \eqref{eqn1}, may} 
appear to provide a numerical method to approximate
$\Psi^{\gamma(s_0)}_{\gamma(0)}$  in the usual way: with step size $\delta=s_0/\ell$,  $V^*(\ell\delta)=V(s_0)$,

\begin{eqnarray}
V^*((i-1)\delta)=V^*(i\delta)-\hbox{tr}(\dot\gamma(i\delta)V^*(i\delta)^\top)\,\gamma(i\delta)\,\delta+A(i\delta)\,\gamma(i\delta)\,\delta,
\label{eqn5b}
\end{eqnarray}
where $A(i\delta)$ is updated at each step using \eqref{eqn2}. 
{However,} $V^*((i-1)\delta)$ obtained in this way is generally neither tangent to ${\mathcal S}_m^k$ nor horizontal. Hence, 
{the use of \eqref{eqn5b}} requires finding, at each step, the re-normalized horizontal component of $V^*((i-1)\delta)$. That is, at each step, we need to project $V^*((i-1)\delta)$ to $\tau^{\phantom{A}}_{\gamma(({i-1})\delta)}({\mathcal S}_m^k)$ and then to the horizontal subtangent space of ${\mathcal S}_m^k$ at $\gamma(({i-1})\delta)$, followed by normalizing it so that its length remains equal to that of $V(s_0)$. 

Now, if $X$ is a pre-shape matrix with rank$(X)\geqslant m-1$, the projection of $V\in M(m,k-1)$ onto the tangent space $\tau^{\phantom{A}}_X(\S_m^k)$ is given by
\begin{eqnarray}
V_t=V-\hbox{tr}(VX^\top)\,X.
\label{eqn5c}
\end{eqnarray}
To obtain the projection of $V_t\in\tau^{\phantom{A}}_X(\S_m^k)$ onto $\mathcal{H}_X(\S_m^k)$, we note that the orthogonal complementary subspace to $\mathcal{H}_X(\S_m^k)$ in $\tau^{\phantom{A}}_X(\S_m^k)$ is $\mathcal{V}_X=\{AX\,:\, A=-A^\top\}$. 
Since $\{A_{ij}\,:\, 1\leqslant i<j\leqslant m\}$ forms an orthonormal basis for the space of $m\times m$ skew-symmetric matrices where $A_{ij}=\frac{1}{\sqrt{2}}\{E_{ij}-E_{ji}\}$ and $E_{ij}$ is the $m\times m$ square matrix with all its elements zero except for the $(i,j)$th element which is one, $\{A_{ij}X/\|A_{ij}X\|\,:\,1\leqslant i<j\leqslant m\}$ forms an orthonormal basis for $\mathcal{V}_X$. Thus, the projection of $V_t\in\tau^{\phantom{A}}_X(\S_m^k)$ onto $\mathcal{H}_X(\S_m^k)$ is given by
\begin{eqnarray}
V_h=V_t-\sum_{1\leqslant i<j\leqslant m}\frac{1}{\|A_{ij}X\|^2}\hbox{tr}\left(A_{ij}X {V_t}^\top\right)A_{ij}X.
\label{eqn5d}
\end{eqnarray}

Combining \eqref{eqn5b}, \eqref{eqn5c} and \eqref{eqn5d}, we have an approximation to $\Psi_{\gamma(s_0)}^{\gamma((\ell-1)\delta)}(V(s_0))$ 
of (\ref{eqn5a}) along the horizontal geodesic $\gamma(s)$, after the first $\delta$-step, as
\begin{eqnarray}
\Psi^{\gamma(s_0)}_{\gamma((\ell-1)\delta)}(V(s_0))\approx\|V(s_0)\|\,\frac{W}{\|W\|}=V((\ell-1)\delta),
\label{eqn5e}
\end{eqnarray}
where $W$ is obtained using \eqref{eqn5d} with $V_t$ there replaced by
\begin{eqnarray}
V^*((\ell-1)\delta)-\hbox{tr}(V^*((\ell-1)\delta)\,\gamma((\ell-1)\delta)^\top)\,\gamma((\ell-1)\delta)
\label{eqn5f}
\end{eqnarray}
and with $X$ there replaced by $\gamma((\ell-1)\delta)$, and where $V^*((\ell-1)\delta)$ in \eqref{eqn5f} is obtained using \eqref{eqn5b}. Note that \eqref{eqn5f} is the projection of $V^*((\ell-1)\delta)$ to $\tau^{\phantom{A}}_{\gamma((\ell-1)\delta)}(\S_m^k)$ as given by \eqref{eqn5c}. Recall that $s_0=\ell\delta$. Then, since 
\begin{eqnarray*}
\Psi^{\gamma(s_0)}_{\gamma((\ell-2)\delta)}(V(s_0))&=&\Psi_{\gamma((\ell-2)\delta)}^{\gamma((\ell-1)\delta)}\circ\Psi^{\gamma(s_0)}_{\gamma((\ell-1)\delta)}(V(s_0))\\
&\approx&\Psi_{\gamma((\ell-2)\delta)}^{\gamma((\ell-1)\delta)}(V((\ell-1)\delta)),
\end{eqnarray*}
application of \eqref{eqn5e} with $\ell$ replaced by $\ell-1$ again gives us an approximation to\\
$\Psi^{\gamma(s_0)}_{\gamma((\ell-2)\delta)}(V(s_0))$. 
Finally, repeating this backward induction on $\ell$ will result in an approximation to $\Psi^{\gamma(s_0)}_{\gamma(0)}(V(s_0))$. A similar procedure forward on $\ell$ will give an approximation to $\Psi^{\gamma(0)}_{\gamma(s_0)}(V(0))$.


\subsection{Size-and-shape splines}\label{sssplines}
All the results for shape spaces have analogues in the size-and-shape setting, where we do not have invariance under scaling. For example, for a configuration in $\mathbb{R}^m$ with $k(>m)$ labelled landmarks, its pre-size-and-shape is what is left after the effects of translation are removed. This pre-size-and-shape can be represented by an $m\times(k-1)$ matrix $\tilde X\in M(m,k-1)$ and the space of the pre-size-and-shapes, $\mathcal{SS}_m^k$, is identical with $M(m,k-1)$. For $\tilde X\not=0$, the tangent space $\tau^{\phantom{A}}_{\tilde X}(\mathcal{SS}_m^k)$ is then also $M(m,k-1)$ and the horizontal subspace of $\tau^{\phantom{A}}_{\tilde X}(\mathcal{SS}_m^k)$, with respect to the quotient map from $\mathcal{SS}_m^k$ to the Kendall size-and-shape space $S\Sigma_m^k$, is given by
\[\H_{\tilde X}(\mathcal{SS}_m^k) =\{V\in M(m,k-1)\mid \tilde XV^\top \mbox{ is symmetric} \}.\]
Horizontal geodesics in $\mathcal{SS}_m^k$ starting from $\tilde X$ take the form
\[\tilde\gamma(s)=\tilde X+ s\,V,\]
where $V\in\H_{\tilde X}(\mathcal{SS}_m^k)$. For two given size-and-shapes $[\tilde X_1]$ and $[\tilde X_2]$, let $\tilde X_1$ and $\tilde X_2$ be the pre-size-and-shapes of $[\tilde X_1]$ and $[\tilde X_2]$ respectively such that $\tilde X_1\tilde X_2^\top$ is symmetric and all its eigenvalues are non-negative except possibly for $\lambda_m$, the smallest one, where $\hbox{sign}(\lambda_m)=\hbox{sign}(\hbox{det}(\tilde X_1\tilde X_2^\top))$. Then, $\tilde X_2-\tilde X_1\in\mathcal{H}_{\tilde X}(\mathcal{SS}_m^k)$ and a shortest geodesic from $[\tilde X_1]$ to $[\tilde X_2]$ can be represented by the horizontal geodesic that connects $\tilde X_1$ and $\tilde X_2$: 
\begin{eqnarray}
\tilde \gamma(s)=\tilde X_1 + s\,(\tilde X_2-\tilde X_1),\qquad s \in [0,1].
\label{eqn0a}
\end{eqnarray}
Thus, the inverse exponential map $\exp_{[\tilde X_1]}^{-1}([\tilde X_2])$ on the size-and-shape space $S\Sigma_m^k$, using its horizontal lift on $\mathcal{SS}_m^k$, is isometrically represented by
\[\exp_{\tilde X_1}^{-1}(\tilde X_2)=\tilde X_2-\tilde X_1,\]
as long as the geodesic between $[\tilde X_1]$ and $[\tilde X_2]$ is unique. 

Moreover, a modification of the argument given in \cite{Le03} shows that, along a horizontal size-and-shape geodesic $\tilde\gamma(s)$ in $\mathcal{SS}_m^k$, the vector field $V(s)$ is horizontal and its projection to $\tau^{\phantom{A}}_{[\tilde\gamma(s)]}(S\Sigma_m^k)$ is the parallel translation, along the size-and-shape geodesic $[\tilde\gamma(s)]$, of the projection of $V(0)$ onto $\tau^{\phantom{A}}_{[\tilde\gamma(0)]}(S\Sigma_m^k)$ if and only if $V(s)$ satisfies the conditions \eqref{eqn4} and 
\[{\dot V(s)=A(s)\,\tilde\gamma(s)\quad\hbox{ for some }A(s)\hbox{ such that }A(s)=-A(s)^\top.}\]
Thus, it follows from the proof for Proposition \ref{prop1} that the result of Proposition \ref{prop1} can also be generalized to size-and-shape space.

\begin{corollary}
Let $\tilde\gamma(s)$, $0\leqslant s\leqslant s_0$, be a given horizontal $\mathcal{C}^1$-curve in $\mathcal{SS}_m^k$ and $V$ be a given horizontal tangent vector in $\tau^{\phantom{A}}_{\tilde\gamma(0)}(\mathcal{SS}_m^k)$. Assume that $\hbox{rank}(\tilde\gamma(s))\geqslant m-1$, except for at most finitely many $s$. Then, the vector field $V(s)$ along $\tilde\gamma(s)$ is horizontal and the projection of $V(s)$ to $\tau^{\phantom{A}}_{[\tilde\gamma(s)]}(S\Sigma_m^k)$ is the parallel transport, along the size-and-shape curve $[\tilde\gamma(s)]$, of the projection of $V$ onto $\tau^{\phantom{A}}_{[\tilde\gamma(0)]}(S\Sigma_m^k)$ if and only if $V(s)$ is the solution of
\begin{eqnarray*}
\dot V(s)&=&A(s)\,\tilde\gamma(s),\qquad s\in[0,\,s_0],\\
V(0)&=&V,
\end{eqnarray*}
where $A(s)$ is skew-symmetric and is the unique solution to \eqref{eqn2} with $\gamma$ being replaced by $\tilde\gamma$.
\end{corollary}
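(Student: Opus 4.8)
The plan is to follow the proof of Proposition~\ref{prop1} essentially line for line, recording the simplifications that occur because $\mathcal{SS}_m^k=M(m,k-1)$ is flat and carries no scaling constraint. I would take as given the characterization stated just above the corollary (the modification of \cite{Le03}): along the horizontal curve $\tilde\Gamma$, a vector field $V(s)$ is horizontal with projection onto $\tau_{[\tilde\Gamma(s)]}(S\Sigma_m^k)$ equal to the parallel transport of the projection of $V$ if and only if $V(s)$ satisfies \eqref{eqn4} (that is, $\tilde\Gamma(s)V(s)^\top$ is symmetric, which here is precisely horizontality in $\mathcal{SS}_m^k$) together with $\dot V(s)=A(s)\tilde\Gamma(s)$ for some $A(s)=-A(s)^\top$. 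This is the analogue of conditions \eqref{eqn3}--\eqref{eqn5}, with the trace condition \eqref{eqn3} now absent and with \eqref{eqn5} collapsing to $\dot V(s)=A(s)\tilde\Gamma(s)$, since $\mathcal{SS}_m^k$ is its own tangent space and no tangential projection of $\dot V(s)$ needs to be subtracted off.

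For the forward implication, assuming the two conditions, I would differentiate \eqref{eqn4}. Because $\tilde\Gamma(0)V(0)^\top$ is symmetric by horizontality of $V$ at $s=0$, symmetry of $\tilde\Gamma(s)V(s)^\top$ for all $s$ is equivalent to $\tfrac{d}{ds}\{V(s)\tilde\Gamma(s)^\top-\tilde\Gamma(s)V(s)^\top\}=0$, which rearranges to the $\tilde\Gamma$-version of \eqref{eqn7}:
\[
\dot V(s)\,\tilde\Gamma(s)^\top-\tilde\Gamma(s)\,\dot V(s)^\top=\dot{\tilde\Gamma}(s)\,V(s)^\top-V(s)\,\dot{\tilde\Gamma}(s)^\top .
\]
Substituting $\dot V(s)=A(s)\tilde\Gamma(s)$ and using $A(s)=-A(s)^\top$ turns the left-hand side into $A(s)\tilde\Gamma(s)\tilde\Gamma(s)^\top+\tilde\Gamma(s)\tilde\Gamma(s)^\top A(s)$, which is exactly \eqref{eqn2} with $\Gamma$ replaced by $\tilde\Gamma$; hence $A(s)$ is forced to solve that equation. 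For the converse I would reverse this algebra: if $V(s)$ solves $\dot V(s)=A(s)\tilde\Gamma(s)$ with $A(s)$ determined by the $\tilde\Gamma$-version of \eqref{eqn2}, then the displayed identity holds, and integrating from $s=0$ (where $\tilde\Gamma(0)V(0)^\top$ is symmetric) recovers \eqref{eqn4}; together with $\dot V(s)=A(s)\tilde\Gamma(s)$ this is precisely the pair of conditions in the characterization. The step in the proof of Proposition~\ref{prop1} that checks $\mathrm{tr}(A(s)\Gamma(s)\Gamma(s)^\top)=0$ in order to reinstate the trace condition \eqref{eqn3} is simply omitted here, since there is no such condition.

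The only place the rank hypothesis enters is the well-posedness of $A(s)$, and here I would quote verbatim the linear-algebra lemma from the proof of Proposition~\ref{prop1}: for a symmetric non-negative definite $m\times m$ matrix $S$ of rank at least $m-1$ and a skew-symmetric $B$, there is a unique skew-symmetric $A$ with $AS+SA=B$, obtained by diagonalizing $S=R\Lambda R^\top$ and setting $\tilde a_{ij}=\tilde b_{ij}/(\lambda_i+\lambda_j)$, which is legitimate because $\lambda_i+\lambda_j>0$ for all $i<j$ once $\lambda_{m-1}>0$. Applying this with $S=\tilde\Gamma(s)\tilde\Gamma(s)^\top$, which is non-negative definite with $\mathrm{rank}(\tilde\Gamma(s)\tilde\Gamma(s)^\top)=\mathrm{rank}(\tilde\Gamma(s))\geqslant m-1$ for all but finitely many $s$, gives a uniquely and continuously determined skew-symmetric $A(s)$ off that finite set, and the coupled system for $(V(s),A(s))$ then has a unique solution with $V(0)=V$ by standard ODE theory, the finitely many degenerate values of $s$ causing no difficulty by continuity. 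I do not expect any substantive obstacle beyond what Proposition~\ref{prop1} already handles: the work is entirely the bookkeeping of matching the reduced conditions to those of Proposition~\ref{prop1} and confirming that no genuinely new estimate is required.
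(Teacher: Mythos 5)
Your proposal is correct and follows essentially the same route as the paper, which simply notes that the corollary "follows from the proof for Proposition \ref{prop1}" once the size-and-shape analogue of \cite{Le03}'s characterization (condition \eqref{eqn4} plus $\dot V(s)=A(s)\tilde\Gamma(s)$ with $A(s)$ skew-symmetric) is taken as given; your expansion — differentiating \eqref{eqn4} to get the $\tilde\Gamma$-version of \eqref{eqn7}, substituting the ODE to force \eqref{eqn2}, omitting the now-absent trace condition \eqref{eqn3}, and reusing the linear-algebra lemma with $S=\tilde\Gamma(s)\tilde\Gamma(s)^\top$ — is exactly the intended bookkeeping.
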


A practical alternative to working directly in size-and-shape space is to work with independent splines in the product of the univariate centroid size space ($\mathbb R^+$) and shape space ($\Sigma^k_m$). In many practical applications it can
be helpful to separate size and shape in this manner, so that size does not overly dominate the statistical analysis.

\section{Application}\label{Sec4}
\subsection{Peptide data} 
In the biomedical sciences it is of great interest to study the changes in shape of a protein, as shape 
is an important component of a protein's function and hence its role in the cell. 
Broad aims in the study of protein dynamics are to investigate how proteins fold into a 3D 
functional shape  and to estimate the full range of such functional shapes for a given protein \citep[e.g.,see][]{Karplus83,Karplus05}. 
Protein folding is very sensitive to external processes and protein misfolding is a key component of various diseases.
 
Molecular dynamics simulations are often used to 
study the the possible configurations of molecules under various scenarios, and these are computationally intensive   
deterministic simulations which use Newtonian mechanics to model the movement of a molecule in a box surrounded by water molecules \citep[e.g.][]{AMBER13}. We consider a dataset of 
in the study of the alanine pentapeptide (Ala$_5$) which is a small protein (peptide) that consists of $k = 29$ atoms in $\R^3$. The data were provided by Professor Charles Laughton, School of Pharmacy, 
University of Nottingham, and further detail on this particular peptide is given by \cite{Margulisetal02} and \citet{Drydenetal19}. 
It is of interest to examine if there are preferred states, i.e. clusters of
shapes which are more commonly formed by the dynamic peptide. Also, low dimensional representations and the
patterns of temporal transitions between states are of interest. 

For this application a 
small temporal subsequence of 30 peptide configurations equally spaced in time is taken, spanning a 10 nanosecond ($10^{-8}s$) period, and we consider several different models for
predicting molecular shapes inbetween the observed peptides. 
Planar projections of these configurations are presented in Figures \ref{fig:peptide.fit}  and \ref{fig:peptide.fit2}  as rainbow coloured dots, 
where red and violet points indicate the first and last landmarks, respectively.
The peptide configuration at the start of this sequence is very irregular in all three dimensions at $t = 1$,
then it gradually straightens over time so that we see a more smoothly curved form at $t = 30$. 
{Their shapes vary substantially with largest pairwise Riemannian shape distance $1.028$ (where the maximum possible value is $\pi/2$) and  this motivates us to consider analysis based on unrolling and unwrapping to a base tangent space, rather than the more straightforward analysis based on a single tangent space projection.}

\begin{figure}[htbp]
\addtolength{\subfigcapskip}{-0.2in}
   \centering \subfigure[$t = 1$]
   {\includegraphics[scale = 0.3]{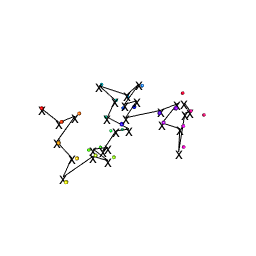}}
      \centering \subfigure[$t = 3$]
   {\includegraphics[scale = 0.3]{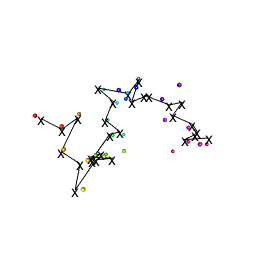}}
    \centering \subfigure[$t = 5$]
   {\includegraphics[scale = 0.3]{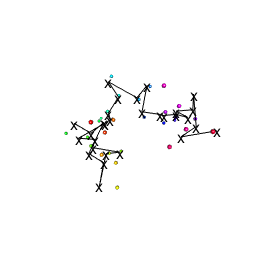}}
   \centering \subfigure[$t = 9$]
   {\includegraphics[scale = 0.3]{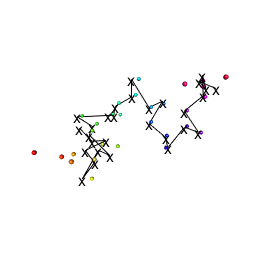}}
     \centering \subfigure[$t = 13$]
   {\includegraphics[scale = 0.3]{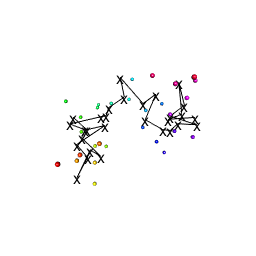}}
        \centering \subfigure[$t = 15$]
   {\includegraphics[scale = 0.3]{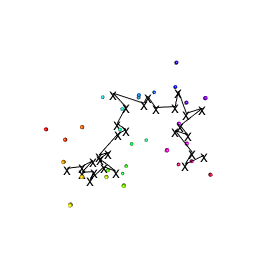}}
    \centering \subfigure[$t = 17$]
   {\includegraphics[scale = 0.3]{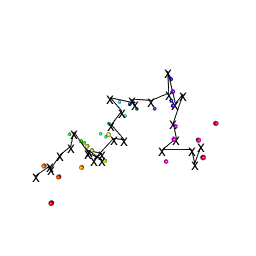}}
        \centering \subfigure[$t = 19$]
   {\includegraphics[scale = 0.3]{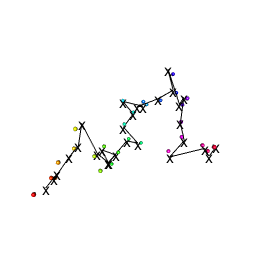}}
     \centering \subfigure[$t = 22$]
   {\includegraphics[scale = 0.3]{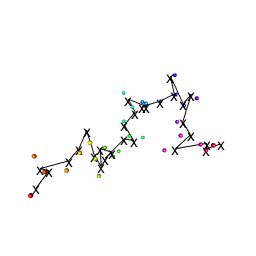}}
      \centering \subfigure[$t = 25$]
   {\includegraphics[scale = 0.3]{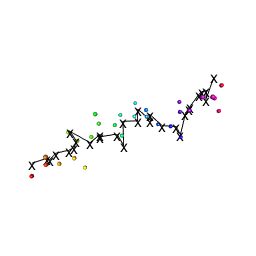}}
      \centering \subfigure[$t = 28$]
   {\includegraphics[scale = 0.3]{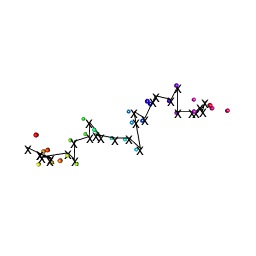}} 
   \centering \subfigure[$t = 30$]
   {\includegraphics[scale = 0.3]{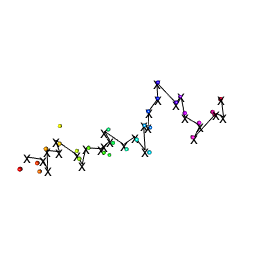}} 
   
     \caption{Configurations of a dynamic short peptide.
   In each subfigure, rainbow coloured dots indicate 29 landmarks and the connected points `$\times$' are the fitted configurations. One particular 
   view is given in (a)-(l).}
   \label{fig:peptide.fit}
\end{figure}

   \begin{figure}[htbp]
\addtolength{\subfigcapskip}{-0.2in}   
   \centering \subfigure[$t = 1$]
   {\includegraphics[scale = 0.3]{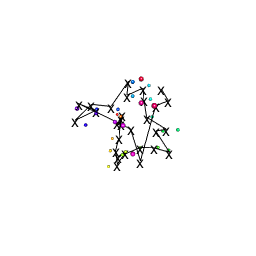}}
     \centering \subfigure[$t = 3$]
   {\includegraphics[scale = 0.3]{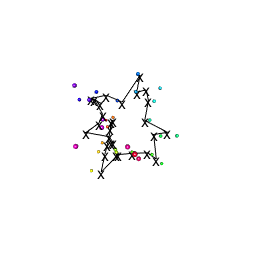}}
  \centering \subfigure[$t = 5$]
   {\includegraphics[scale = 0.3]{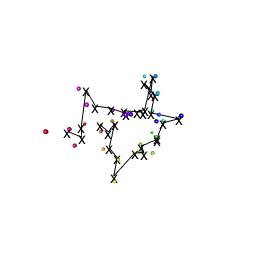}}
     \centering \subfigure[$t = 9$]
   {\includegraphics[scale = 0.3]{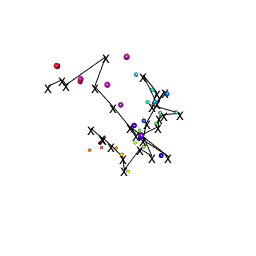}}
      \centering \subfigure[$t = 13$]
   {\includegraphics[scale = 0.3]{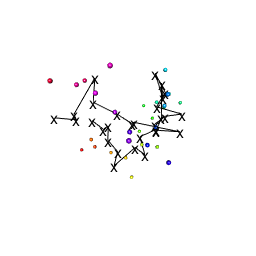}}
         \centering \subfigure[$t = 15$]
   {\includegraphics[scale = 0.3]{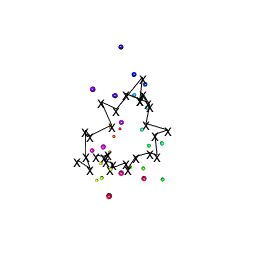}}
    \centering \subfigure[$t = 17$]
   {\includegraphics[scale = 0.3]{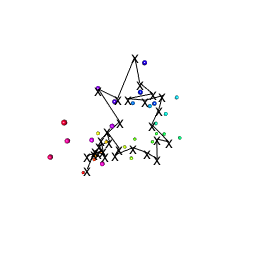}}
         \centering \subfigure[$t = 19$]
   {\includegraphics[scale = 0.3]{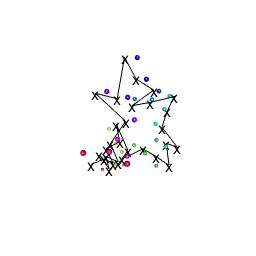}}
   \centering \subfigure[$t = 22$]
   {\includegraphics[scale = 0.3]{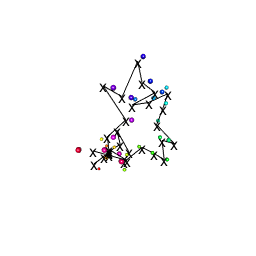}}
  \centering \subfigure[$t = 25$]
   {\includegraphics[scale = 0.3]{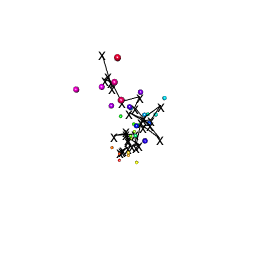}}
    \centering \subfigure[$t = 28$]
   {\includegraphics[scale = 0.3]{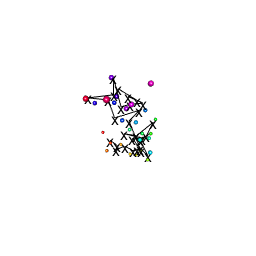}}
     \centering \subfigure[$t = 30$]
   {\includegraphics[scale = 0.3]{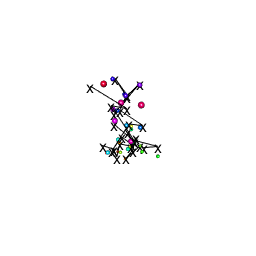}}   \caption{Configurations of a dynamic short peptide.
   In each subfigure, rainbow coloured dots indicate 29 landmarks and the connected points `$\times$' are the fitted configurations. 
   Orthogonal views to those of Figure \ref{fig:peptide.fit} are given in (a)-(l).}
   \label{fig:peptide.fit2}
\end{figure}

\subsection{Cubic shape smoothing spline fitting}
In Figures \ref{fig:peptide.fit} and \ref{fig:peptide.fit2} the fitted configurations using the cubic shape smoothing spline to the moving configurations are indicated by `$\times$' with connecting solid lines.  We used $G = 2$ interpolation points between each pair of data points in the base path and we chose $\varepsilon = 10^{-3}$ for the convergence in the fitting algorithm. 
The candidates for $\lambda$ in the cross-validation were taken from the set $\{10^{-9}, 10^{-7}, 10^{-6}, 10^{-5}, 10^{-4}, 10^{-3}, 10^{-1} \}$, and 
for the peptide data the chosen smoothing parameter is $\lambda=10^{-4}$. 

{Although the two dimensional projections of some peptides in Figure \ref{fig:peptide.fit} look close to collinearity, their actual three-dimensional figures are not close to collinearity as seen in Figure \ref{fig:peptide.fit2}. Hence, their shapes are not close to the singularity set. 
For example the 1st principal axis of the final peptide explains $90.5\%$ of the variability, and while this is 
fairly high it is quite far from being collinear ($\sim100\%$). 

We can see that the shape spline has indeed smoothed the path of configurations. In some instances the cubic smoothing spline fit is close 
to the data (e.g. $t=19$) and in other cases it is further away (e.g. $t=15$).

\subsection{Shape PCA}
For a lower dimensional view of the peptide shapes and the shape spline we can carry out principal components analysis of shape in the 
horizontal tangent space to pre-shape space \citep[Chapter 7]{Drydmard16}, which is implemented in R in the {\tt shapes} package \citep{Dryden-shapes}.  
Figure \ref{fig:peptide.pcscore} 
shows the first three shape principal component scores of the 30 peptide configurations in the tangent space at their mean shape. 
Principal component (PC) scores 1,2,3  explain 43.1\%,  22.0\% and 12.3\% of the shape variability respectfully, 
and so this plot includes the most important aspects of shape variability. The 
connected dots indicate the PC scores of the data points and the red lines on the other hand are the PC scores of the fitted paths, where their approximate starting points are indicated by `$S$' in all panels. The cubic shape smoothing spline is shown in Figure \ref{fig:peptide.pcscore}(a) and 
we can see that the smoothing spline path is a good fit to the data.  
\begin{figure}[htbp]
\begin{center}
    \subfigure[Smoothing cubic spline model]
   {\includegraphics[width=5cm]{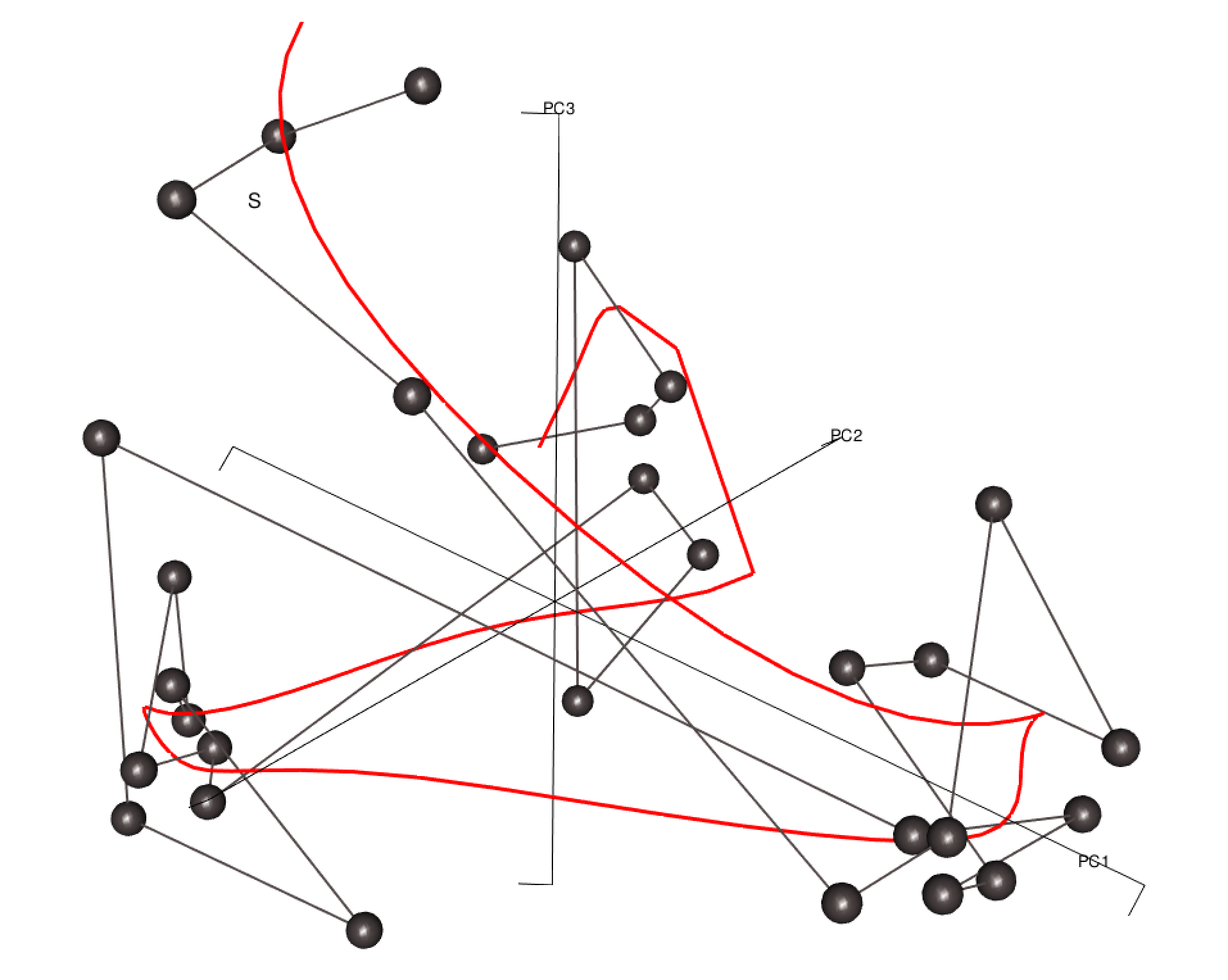}}
  \subfigure[One geodesic model]
   {\includegraphics[width=5cm]{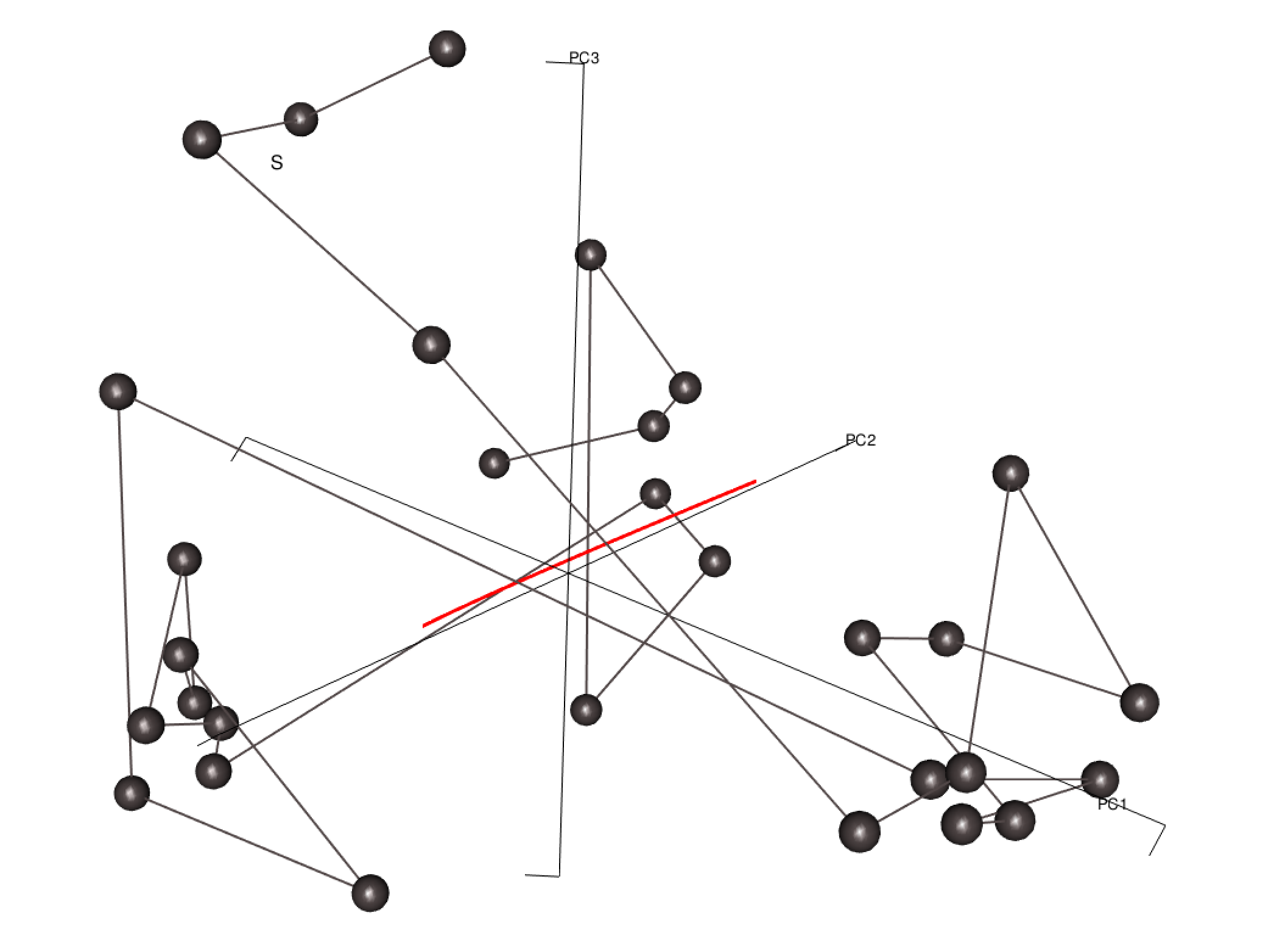}}\\
   \subfigure[Linear spline 3 knots model]
   {\includegraphics[width=5cm]{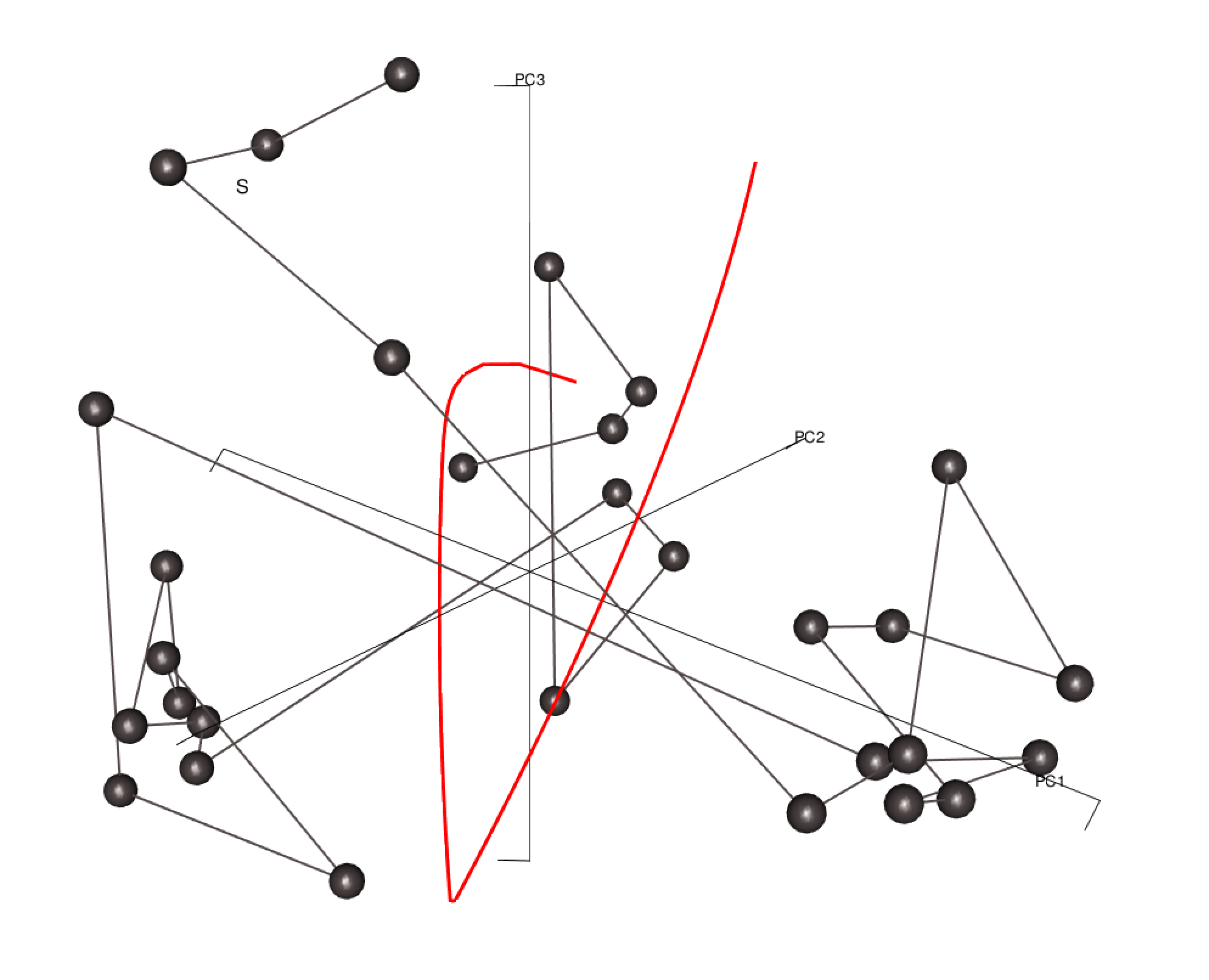}}
     \subfigure[Linear spline 4 knots model]
   {\includegraphics[width=5cm]{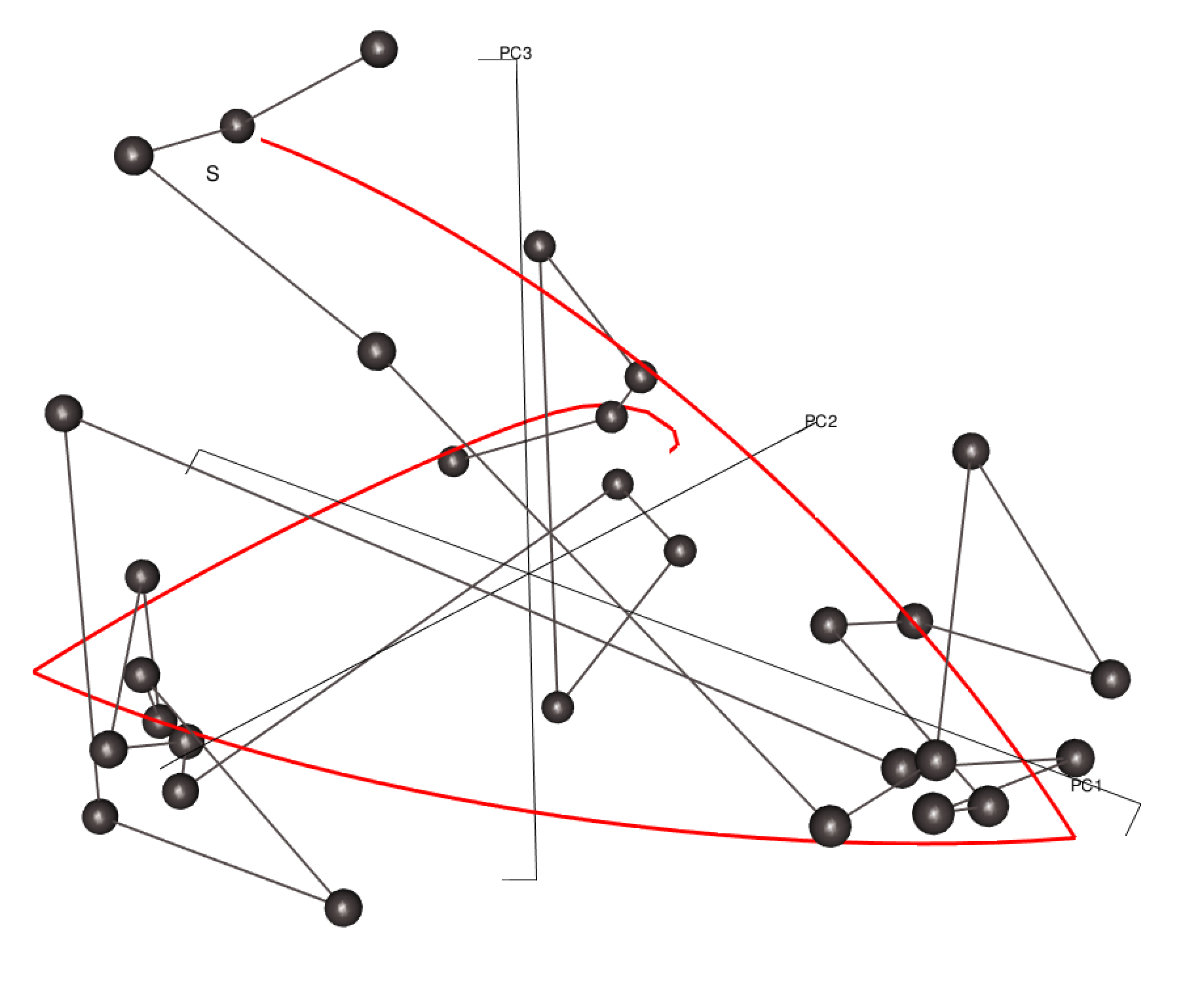}} 
   \end{center}
   \caption{Principal component scores demonstrating (a) the cubic shape smoothing spline fit, (b) single geodesic and (c),(d) linear smoothing spline 
   fits with 3,4 knots. 
   Dots are at PC scores of data points and the solid red thick lines are at PC scores of the fitted values.
   (`S' is located near to the starting point.) }
   \label{fig:peptide.pcscore}
\end{figure}

\subsection{Geodesic and linear spline fitting}
We also compare some alternative models including a single geodesic model and linear splines in the base tangent space with $3,4,5,6,7$ equally spaced knots. We assume 
an independent Gaussian model for errors in the base tangent space with variance $\sigma_i^2$ in the $i$th tangent space dimensions, $i=1,\ldots,M = 87-7 = 80$.  The total number of parameters $p$ in each model 
is given in Table \ref{TABIC}, although for the cubic spline we use the sum of the effective degrees of freedom plus the number of variance parameters. For the information criteria we use
$$ IC = n \sum_i \log \hat \sigma_i^2 +  K_{IC} p $$
where $K_{IC}=2$ for the Akaike Information Criterion (AIC) and $K_{IC} =\log n$ for the Bayesian Information Criterion (BIC), and preferred models have smaller ICs. Here the sample size $n$ is not large compared to $p$ and so we use these criteria 
as informal guides rather than for formal model choice.    

\begin{table}[htbp]
\begin{center}
\begin{tabular}{c|cccc}
Model &  $\sum_i \log(\hat\sigma_i^2)$ & $p$ & AIC & BIC\\
\hline
Geodesic & -259.5 & 240 & -7305 & -6969\\
Linear spline 3 knots & -270.5 & 320 & -7476 & -7028\\
 Linear spline 4 knots & -297.7 & 400 & -8130 & {\bf -7570} \\
 Linear spline 5 knots & -297.9 & 480 & -7977 & -7304\\
 Linear spline 6 knots & -311.8 & 560 & -8235 & -7450 \\
 Linear spline 7 knots & -321.3 &  640 & -8360 & -7463\\
 Cubic spline 31 knots &  -349.4 & 871.7* & {\bf -8739} & -7518\\
 \hline
 \end{tabular}
 \end{center}
  \caption{Model fitting and comparison using AIC and BIC. *Using effective degrees of freedom plus number of variance parameters.}
  \label{TABIC}
 \end{table}
 
 From Table \ref{TABIC} see that the linear spline with 4 knots and the cubic spline are favoured using BIC and AIC respectively, and with BIC 
 there is little to choose between the linear 4 knots model and the cubic spline model.  
 Some of the alternative model fits can also be seen in Figure \ref{fig:peptide.pcscore}(b)-(d) using the PC scores. 
 The PC score fitted paths in (c),(d) are curved as expected as the mean shape of these peptides does not lie in either of these fitted geodesic segments. 
As shown in (b),(c), the geodesic and linear spline 3 knots models do not explain 
the data points well. 
On the other hand, the (a) cubic spline model and (d) linear spline 4 knots model are more reasonable. Both (a) and (d) provide a fit to each of four 
distinct shapes in the data with transition paths inbetween. 
Four preferred states have been observed in this dataset in earlier studies \citep{Drydenetal19}, and further evidence from this analysis that there are four distinct states is valuable confirmation. 

If we had used the Procrustes tangent space at the outset for spline fitting there would have been distortions in the distances between neighbouring configurations, particularly near the end of the sequence, leading to inaccurate predictions. 
In particular, the Procrustes tangent space distance between neighbouring pairs of configurations is around $20-75\%$ larger than the Riemannian distance at times $24,25,26,28$, whereas it is within $10\%$ for the rest of the sequence. 
In the most extreme case between $t=26$ to $t=27$ the Procrustes tangent distance is $0.648$ whereas the Riemannian distance is $0.374$, and we can see extra curvature in the fitted spline near the end of the sequence in the PC scores plot in Figure \ref{fig:peptide.pcscore}(a).

\subsection{Interpolation between states}
An attractive feature of the spline fit is that smooth paths between different states can be explored, to investigate how a peptide transitions in shape from one state to another. For the smooth prediction we have used the data at all the integer times 
but have predicted at $G=2$ equally spaced time points between integer times. 
In Figure \ref{fig:pred} we display the predicted 
shape change in the transition to a state in a later part of the simulation using the cubic spline, at times $t=22$ to $t=25$ at equally spaced intervals. 
We can see that the smooth path predicts the shape change between data points  well, and that the straightening of the peptide is seen in the smoothed predicted path in shape space.

\begin{figure}[htbp]
\addtolength{\subfigcapskip}{-0.2in}
   \centering \subfigure[$t = 22$]
   {\includegraphics[scale = 0.3]{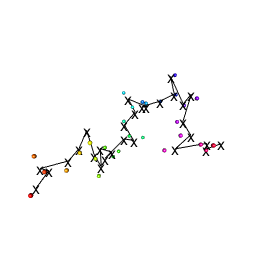}}
   \centering \subfigure[$t = 22.33$]
   {\includegraphics[scale = 0.3]{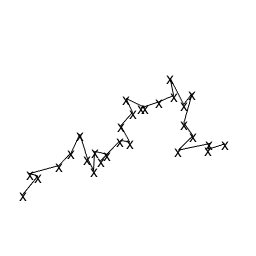}}
     \centering \subfigure[$t = 22.67$]
   {\includegraphics[scale = 0.3]{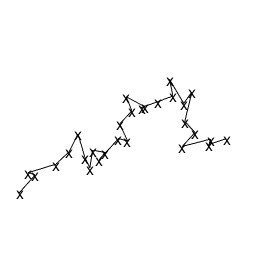}}
   \centering \subfigure[$t = 23$]
   {\includegraphics[scale = 0.3]{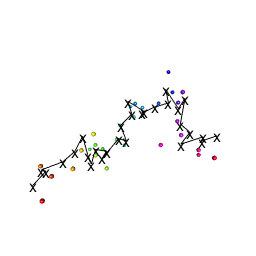}}
   \centering \subfigure[$t = 23.33$]
   {\includegraphics[scale = 0.3]{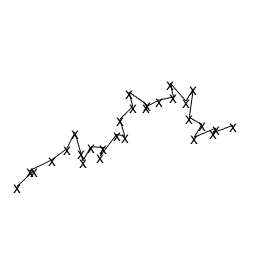}}
     \centering \subfigure[$t = 23.67$]
   {\includegraphics[scale = 0.3]{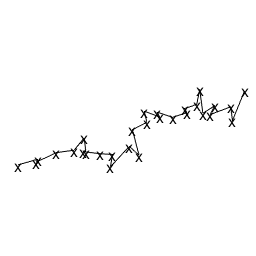}}
     \centering \subfigure[$t = 24$]
   {\includegraphics[scale = 0.3]{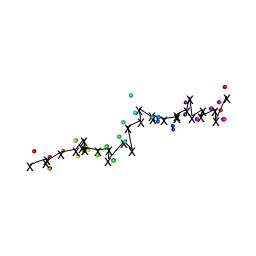}}
      \centering \subfigure[$t = 24.33$]
   {\includegraphics[scale = 0.3]{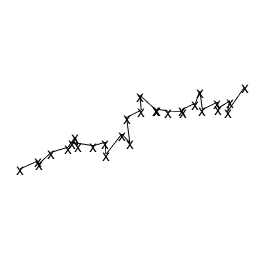}}
      \centering \subfigure[$t = 24.67$]
   {\includegraphics[scale = 0.3]{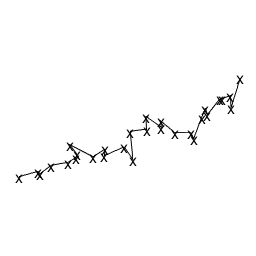}}
       \centering \subfigure[$t = 25$]
   {\includegraphics[scale = 0.3]{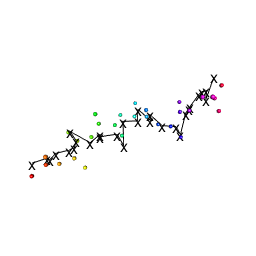}}
   \caption{Configurations of a moving short peptide and predictions for part of the sequence using the cubic spline.
   The rainbow coloured dots indicate 29 landmarks and the connected points `$\times$' are the fitted configurations. 
   Every integer time point has been used for the spline fitting, and the predictions are shown inbetween (at non-integer time points)}
   \label{fig:pred}
\end{figure}

\section{Discussion}\label{Sec5}
An important problem in general in the analysis of molecular dynamics data is the prediction of molecule shape at different parts of the 3D shape space which have not been visited by molecular dynamics simulations \citep[e.g.][]{Laughtonetal09}. Hence, interpolation between visited states (as in our application) is of wide practical interest.

We have chosen to remove size for the peptides and concentrate on shape information alone. The centroid size of the peptide is given in 
Figure \ref{fig:size} and we see that the final part of the sequence is a little larger according to this 
measure. If scale is important to retain than a reasonable approach could be to consider a product of a univariate cubic/linear spline on the scalar centroid size with our spline on the shape space. 
Alternatively the results of Section \ref{sssplines} for size-and-shape space could be used. 

\begin{figure}[htbp]
\begin{center}
      {\includegraphics[width=5cm]{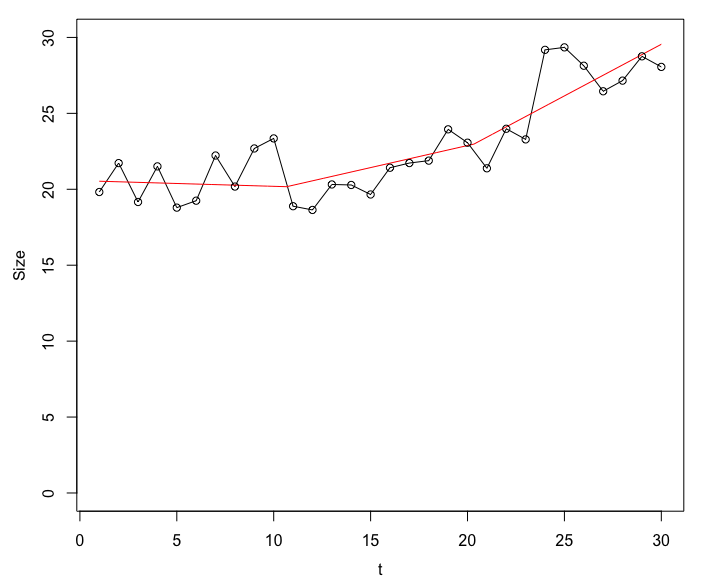}}   
   \end{center}
   \caption{Centroid size for the peptide data (black), and a fitted linear spline with 4 knots (in red). }
   \label{fig:size}
\end{figure}

A further application of shape smoothing spline fitting for some widely varying simulated shape data is given in the Supplementary Material.

Note finally that there are many other applications of smoothing splines on Riemannian manifolds \cite[e.g., see][]{SDKLS}. The main advantage of the unrolling, unwrapping and wrapping method over a simple tangent space method is that 
much larger variations in data can be handled, such as are present in the peptide application. 
{Since it is possible to bypass the need for the knowledge of the curvature tensor in the unrolling technique, 
our implementation of spline fitting on manifolds does not involve the curvature tensor explicitly, which is often difficult to calculate. 
{Hence,} the range of potential applications is very broad indeed.

\section*{Acknowledgement}

This work was supported by the Engineering and Physical Sciences Research Council [grant numbers EP/K022547/1, EP/T003928/1], and
Royal 
Society Wolfson Research Merit Award [grant number WM110140]. We are grateful to the Editor, Associate Editor and two referees for their very helpful comments. 
In particular we are grateful to a referee for suggesting the inclusion of linear splines as part of the work. 

\bibliographystyle{apalike}
\bibliography{model3d}

\newpage 
\subsection*{Appendix} 
\begin{algorithm}
\begin{enumerate}
Consider configurations observed at times $t_0,\ldots,t_n$, with pre-shapes denoted by $X_0,\ldots,X_n$. 
Let $\delta$ and $\varepsilon$ be two given small positive numbers.

\item Rotate the successive pre-shapes, $X_{i+1}$, such that the resulting $X_{i+1}$ is the Procrustes fit of the original one onto $X_i$. That is, rotate the successive pre-shapes $X_{i+1}$ to satisfy the conditions that
$X_iX_{i + 1}^\top$ is symmetric and that all its eigenvalues are non-negative except possibly for the smallest one whose sign is the same as the sign of $\hbox{det}\left(X_iX_{i + 1}^\top\right)$. Denote the set of the resulting pre-shapes by $\D = \{ X_i : 0 \leqslant i \leqslant n \}$ and
let $\gamma_{_\D}$ be the initial piecewise horizontal geodesic path of $\D$ such that
$\gamma_{_\D} (t_i) = X_i$.

\item Construct $G$ grid points between successive two times such that
$$
t_0 = t_{00} < t_{01} < t_{02} < \ldots < t_{0G} <
t_1 = t_{10} < \ldots < t_{ij} < \ldots <
t_n = t_{n0}
$$
which gives $t_{ij}, i = 0, 1, \ldots, n, j = 0, 1, \ldots, G$ for $i \leqslant n - 1$ and $j = 0$ for $i = n$.
where the difference between successive $t_{ij}$ is less than or equal to $\delta$.

\item Set the base path $\gamma_{_{\D_1}}$ to be the piecewise horizontal geodesic passing through $\D_1 = \{ \gamma_{_\D} (t_{ij}) : \forall i, j \}$.

\item Using the {unrolling and unwrapping} procedures  described in Section \ref{Sec2}, with the parallel transport $P$ there replaced by $\Psi$ defined by \eqref{eqn5a}, using the expression \eqref{eqn0c} for the inverse exponential, and using the approximation procedure for $\Psi$ described at the end of 
Section \ref{ParallelTransport}, unwrap the data $\D$ into $\tau_{\gamma_{_{\D_1}} (t_0)} (\S_m^k)$
with respect to the base path $\gamma_{_{\D_1}}$
which gives 
\[\D^\dagger := \{X_0^\dagger, X_1^\dagger, \ldots, X_n^\dagger \} \subset \H_{\gamma_{_{\D_1}} (t_0)} (\S_m^k).\]

\item Fit the cubic smoothing spline (\ref{cubicspline}) to $\D^\dagger$
and find fitted values at the times of the grid points
given in Step 2, giving 
\[\D_2^\dagger := \{ Z_{ij}^\dagger = \widehat{f^\dagger}(t_{ij}, \hat\lambda) : i, j \} \subset \H_{\gamma_{_{\D_1}} (t_0)} (\S_m^k).\]

\item Using the {wrapping} procedure {described in Section \ref{Sec2} with the parallel transport $P$ there replaced by $\Psi$ defined by \eqref{eqn5a}, using \eqref{eqn0d} for the exponential map on the sphere $\mathcal S_m^k$, and using the approximation procedure for $\Psi$ described at the end of 
Section \ref{ParallelTransport},} wrap $\D_2^\dagger$ back into the pre-shape sphere
with respect to the base path $\gamma_{_{\D_1}}$, giving $\D_2 := \{ Z_{ij} : i, j \} \subset \S_m^k$.

\item Successively rotate $Z_{ij}$ in $\D_2$ such that the resulting $Z_{i\,j+1}$ is the Procrustes fit of the original one onto $Z_{ij}$, 
and obtain the piecewise horizontal geodesic $\gamma_{_{\D_2}}$ passing through the resulting $Z_{ij}$.
Then, $\gamma_{_{\D_2}}$ becomes the base path in the next iteration.

\item If $\max \{ d([\gamma_{_{\D_2}} (t_{ij})], [\gamma_{_{\D_1}} (t_{ij})]) : i, j \} \geqslant \varepsilon$,
replace $\D_1$ by $\D_2$ (where $d$ is the shape distance); 
successively rotate $X_i$ in $\D$ such that the resulting $X_i$ is the Procrustes fit of the original one onto $Z_{i0}$; update $\D$ by the resulting $X_i$; and then go to Step 4.
Otherwise, stop the iterations.

\end{enumerate}
\end{algorithm}

\newpage

{\bf \Large Supplementary Material to `Smoothing splines on Riemannian manifolds, with applications to 3D shape space' by Kim, Dryden, Le and Severn} 

\section*{Simulated data}
We apply the shape smoothing spline methodology to some widely varying simulated datasets of configurations in $\mathbb{R}^3$ with $k = 8$ landmarks. Four initial vertex configurations are chosen such that the Riemannian shape distances between successive configurations are 0.47, 0.75 and 0.54 respectively. Note that the Riemannian shape distance has full range $[0, \pi / 2]$, so that the distances 
between the shapes of these consecutive configurations are quite large and the full path covers a large distance in shape space. 
Then four equally spaced shapes on each of the three geodesic segments connecting the successive shapes of the vertex configurations are generated; the configurations of these 16 shapes are shown in Figure  \ref{fig:conf.true}, where the 1st, 6th, 11th and 16th are the four initial configurations. 
Then, we add Gaussian noise with standard deviation $\sigma = 0.05$ independently to each landmark of the original configurations to generate perturbed data which is shown in Figure \ref{fig:conf.simul}.
The configurations in these examples vary considerably in shape and so this provides a challenging setting for shape spline fitting. 

In more detail let the original configurations be  $\mu_i, i=0,1,\ldots,n$ with shapes that lie on a continuous piecewise geodesic path of three geodesics 
in shape space, and the $\mu_i$ are $m \times k$ matrices with $k=8, m=3, n=15$. 
These original configurations are displayed in Figure \ref{fig:conf.true}. The model for the perturbed data is 
$$ Y_i = \mu_i + E_i \; , \; i=0,1,\ldots,n, $$
where the $(r,s)$th element of  $E_i$ is independently $(E_i)_{rs} \sim N(0,\sigma^2)$, $r=1,\ldots,m$, $s=1,\ldots,k$ 
and $\sigma=0.05$.  These perturbed configurations are displayed in Figure \ref{fig:conf.simul}.

\begin{figure}[p]
   \centering
   {\includegraphics[scale = 0.17]{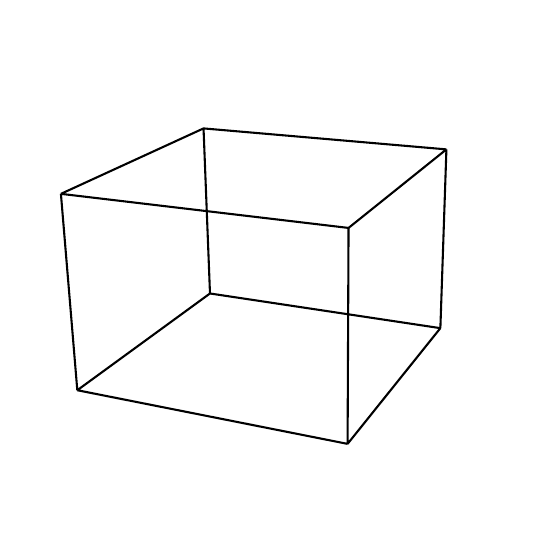}}
   \centering
   {\includegraphics[scale = 0.17]{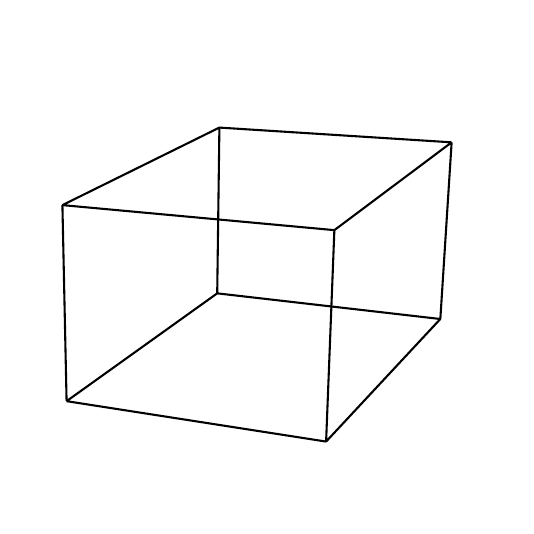}}
   \centering
   {\includegraphics[scale = 0.17]{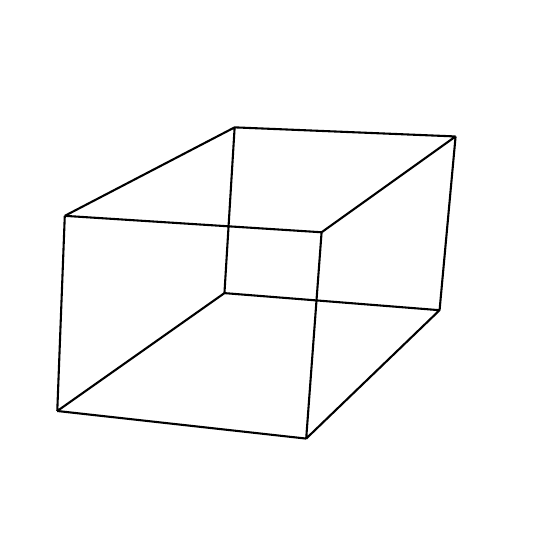}}
   \centering
   {\includegraphics[scale = 0.17]{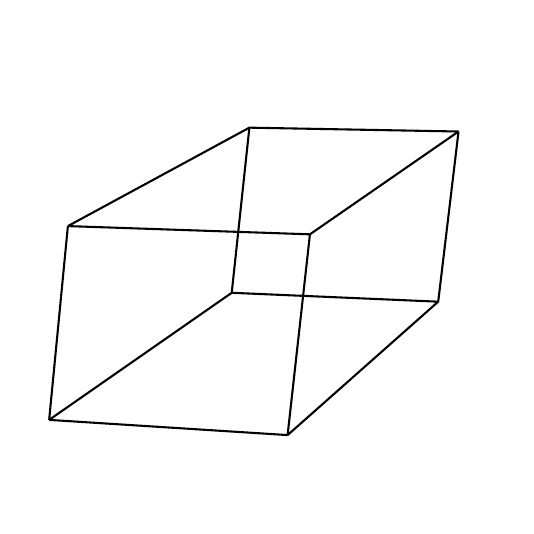}}
   \centering
   {\includegraphics[scale = 0.17]{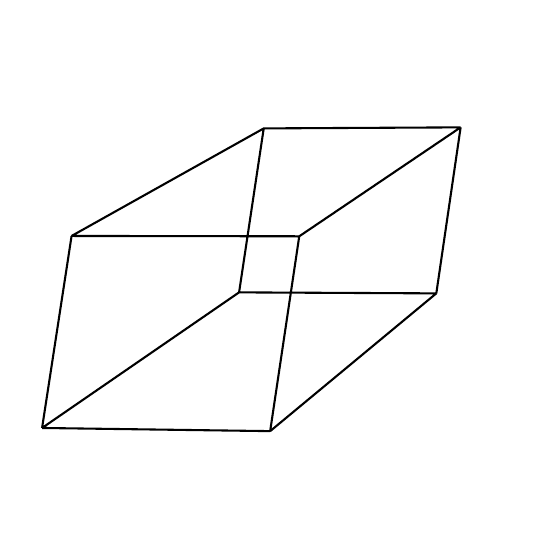}}
   \centering
   {\includegraphics[scale = 0.17]{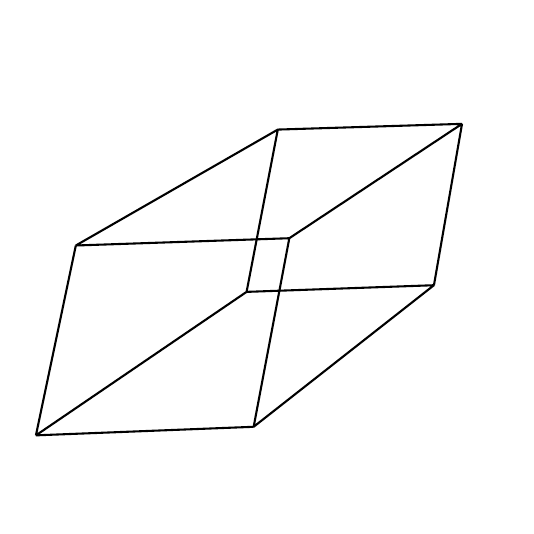}}
   
   \centering
   {$\qquad\qquad$
   \includegraphics[scale = 0.17]{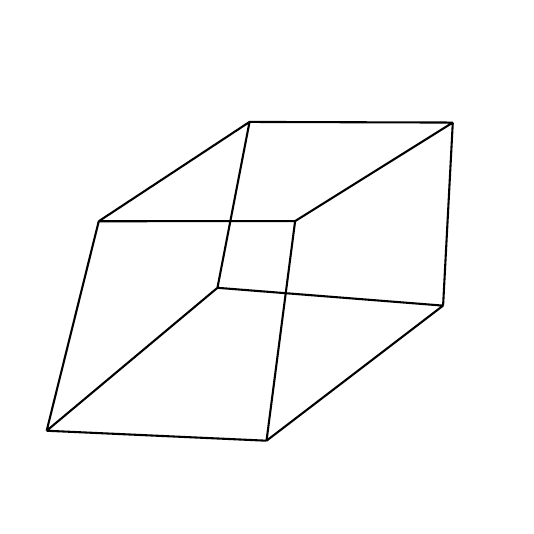}
   \includegraphics[scale = 0.17]{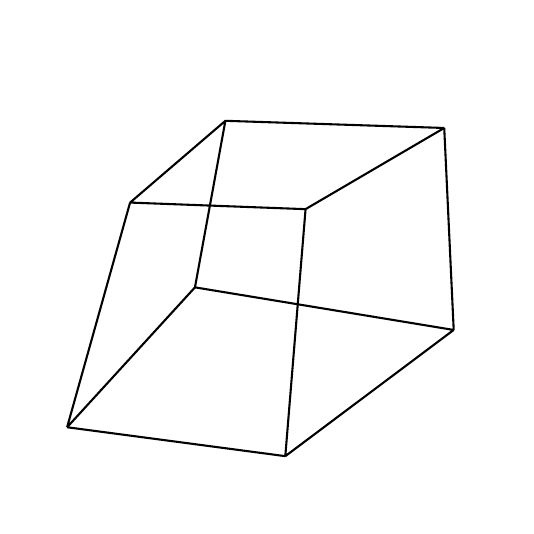}
   \includegraphics[scale = 0.17]{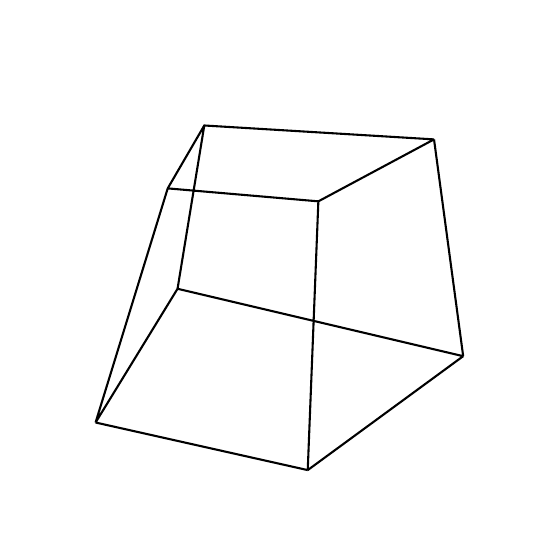}
   \includegraphics[scale = 0.17]{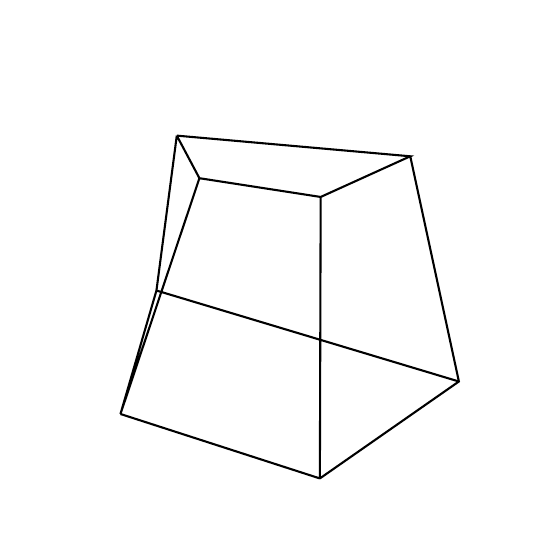}
   \includegraphics[scale = 0.17]{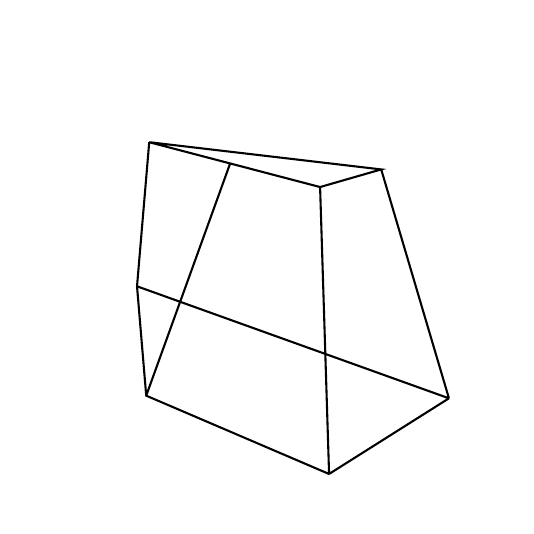}}
   
   \centering
   {$\qquad\qquad$ 
   \includegraphics[scale = 0.17]{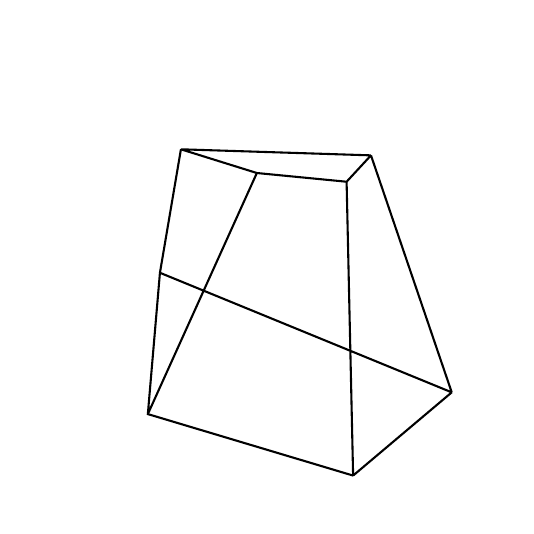}
   \includegraphics[scale = 0.17]{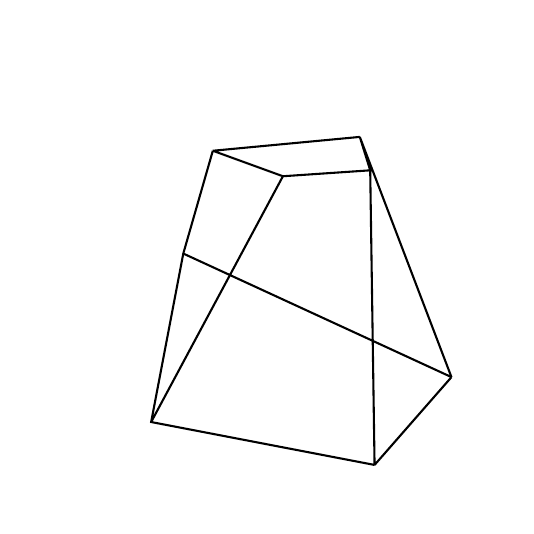}
   \includegraphics[scale = 0.17]{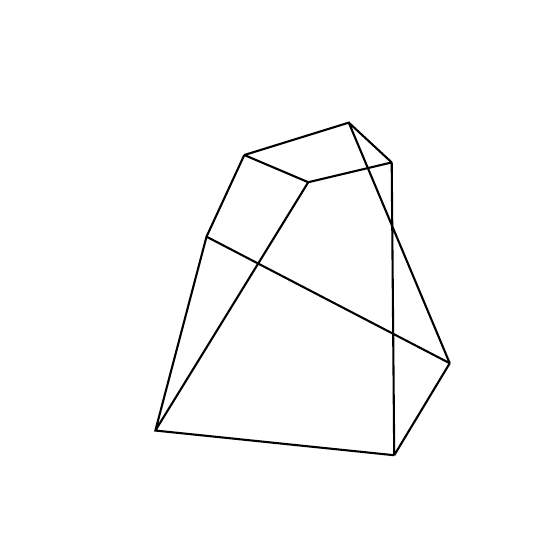}
   \includegraphics[scale = 0.17]{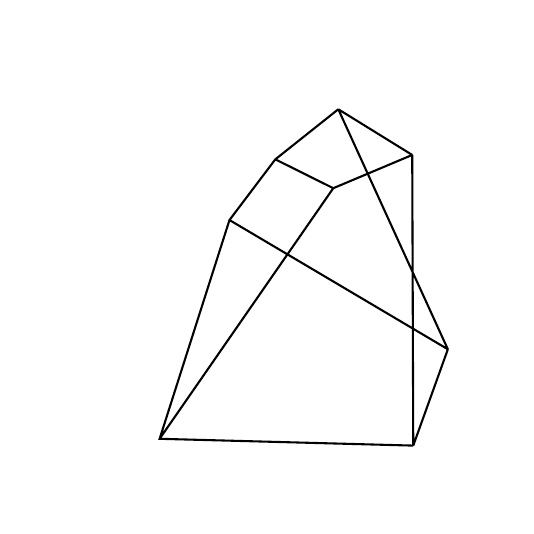}
   \includegraphics[scale = 0.17]{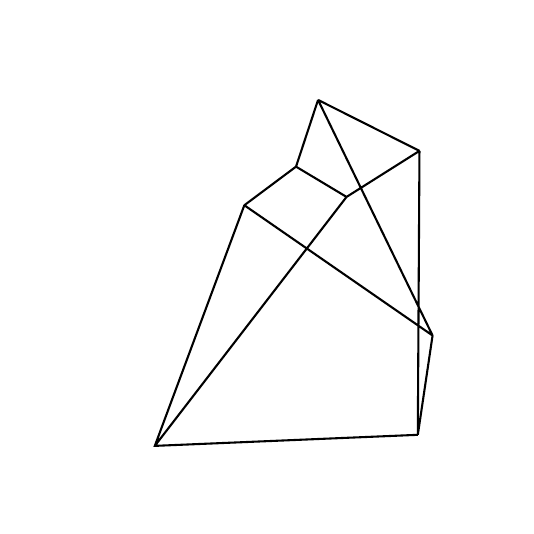}}
   \caption{Original configurations.}
   \label{fig:conf.true}
\end{figure}
\begin{figure}[htbp]
   \centering
   {\includegraphics[scale = 0.17]{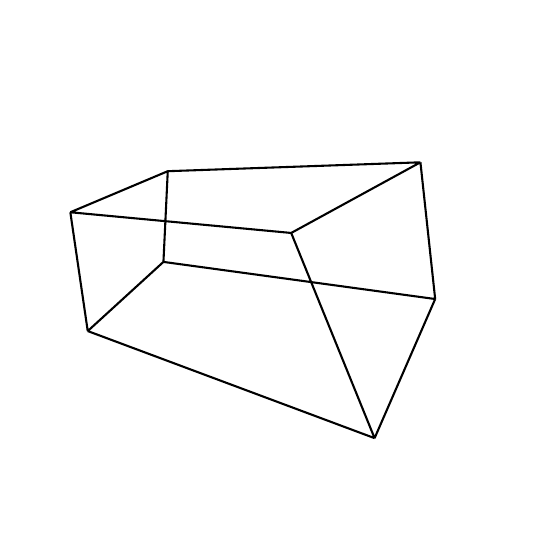}}
   \centering
   {\includegraphics[scale = 0.17]{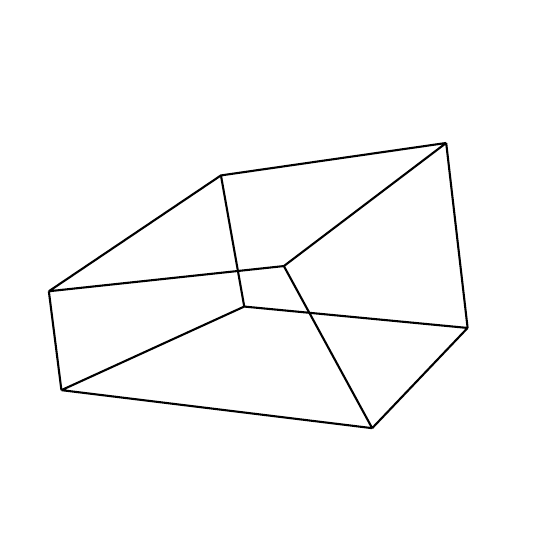}}
   \centering
   {\includegraphics[scale = 0.17]{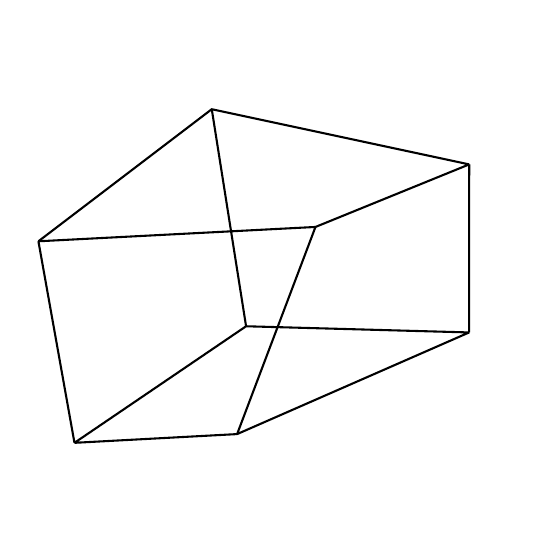}}
   \centering
   {\includegraphics[scale = 0.17]{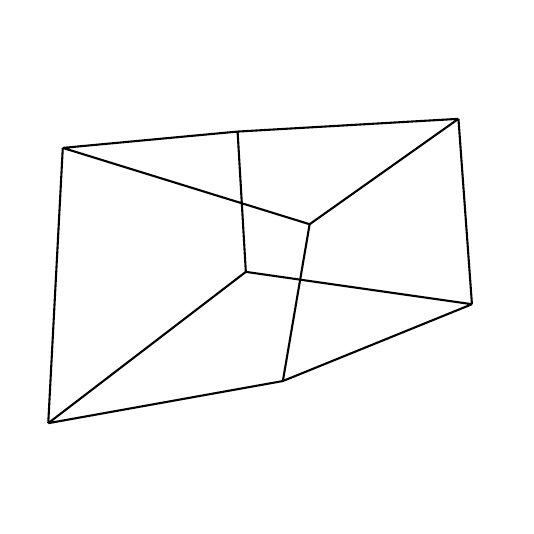}}
   \centering
   {\includegraphics[scale = 0.17]{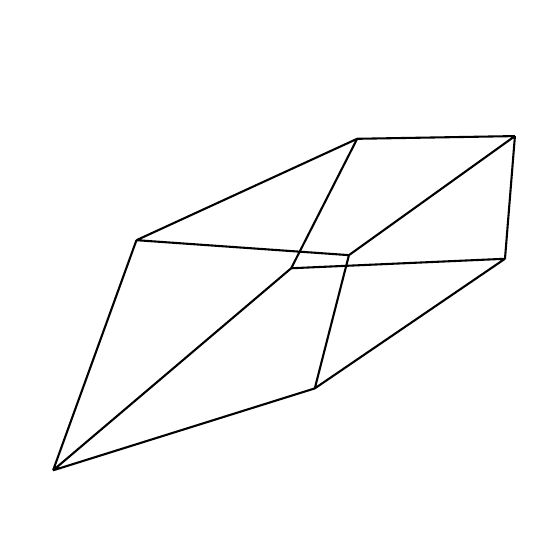}}
   \centering
   {\includegraphics[scale = 0.17]{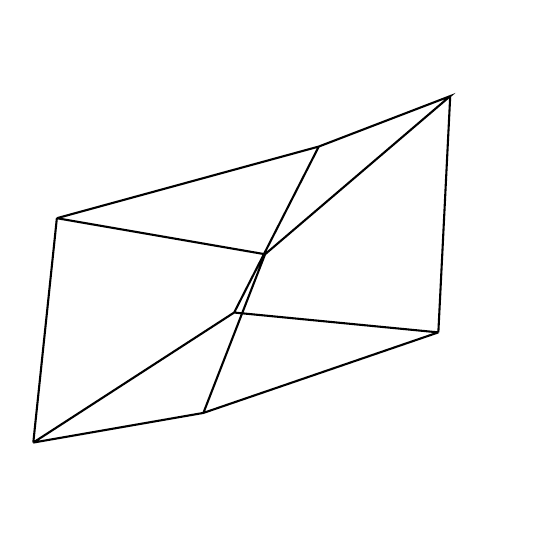}}

   \centering
   {$\qquad\qquad$
   \includegraphics[scale = 0.17]{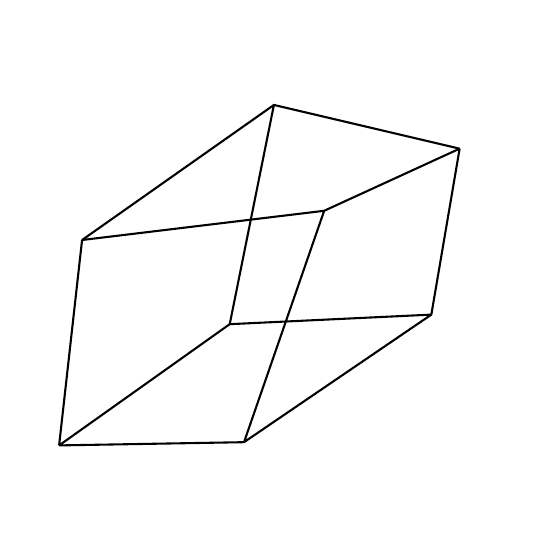}
   \includegraphics[scale = 0.17]{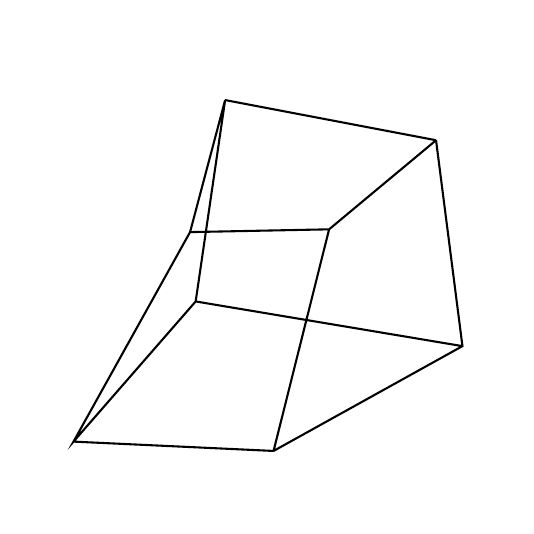}
   \includegraphics[scale = 0.17]{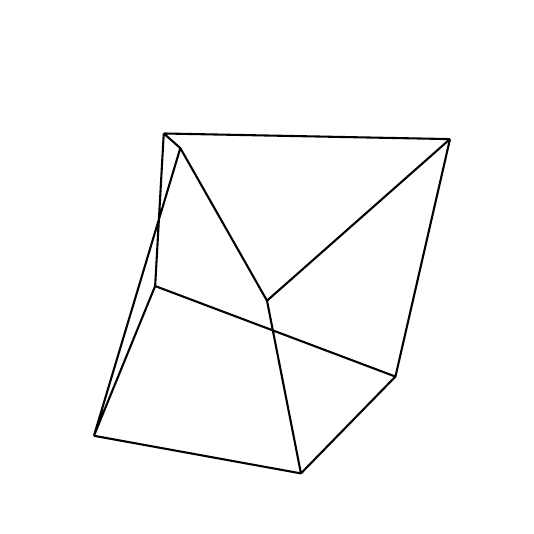}
   \includegraphics[scale = 0.17]{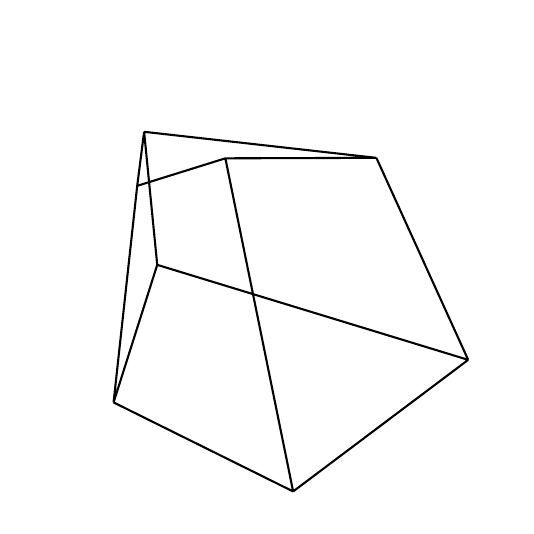}
   \includegraphics[scale = 0.17]{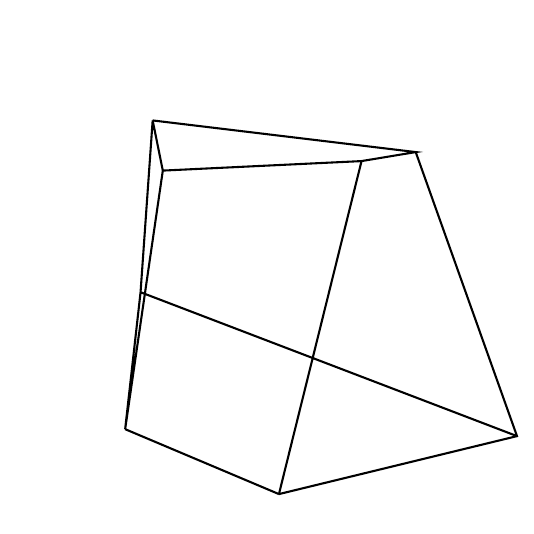}}
   
   \centering
   {$\qquad\qquad$
   \includegraphics[scale = 0.17]{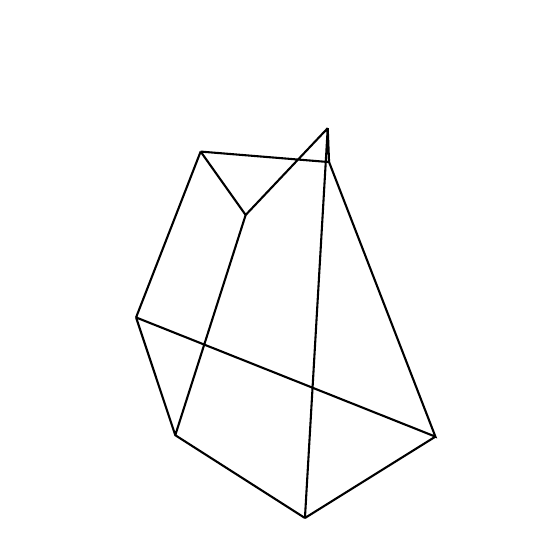}
   \includegraphics[scale = 0.17]{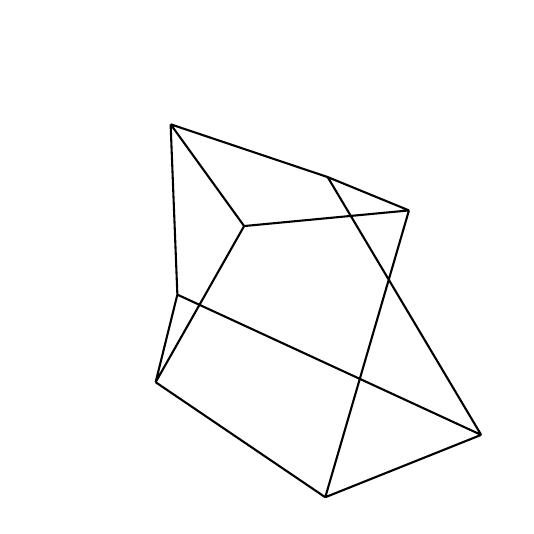}
   \includegraphics[scale = 0.17]{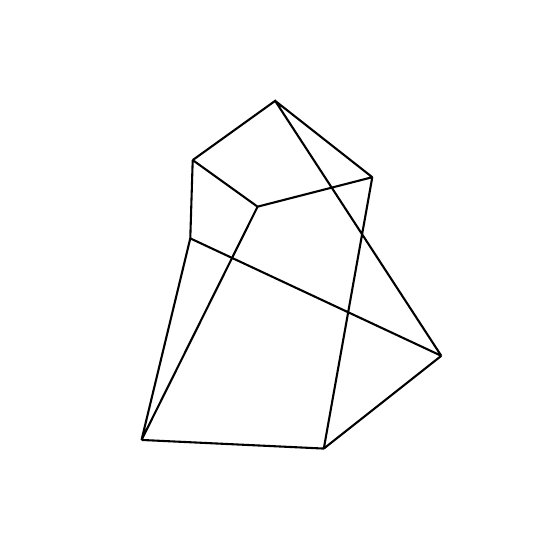}
   \includegraphics[scale = 0.17]{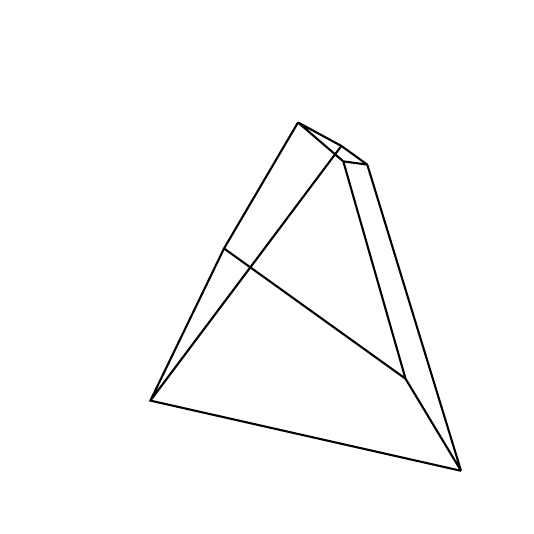}
   \includegraphics[scale = 0.17]{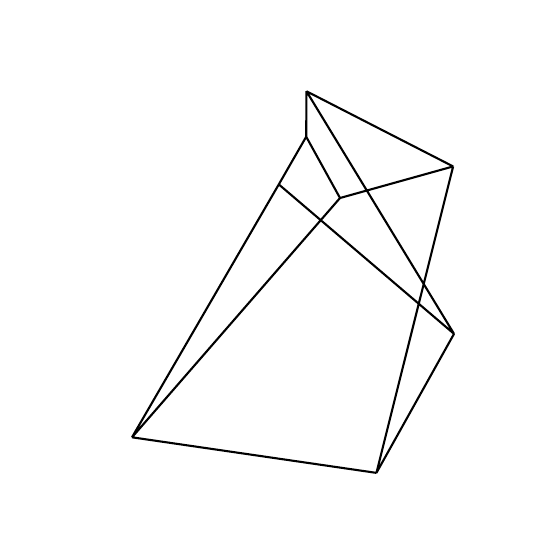}}
   \caption{Perturbed configurations.}
   \label{fig:conf.simul}
\end{figure}
\begin{figure}[htbp]
   \centering
   {\includegraphics[scale = 0.17]{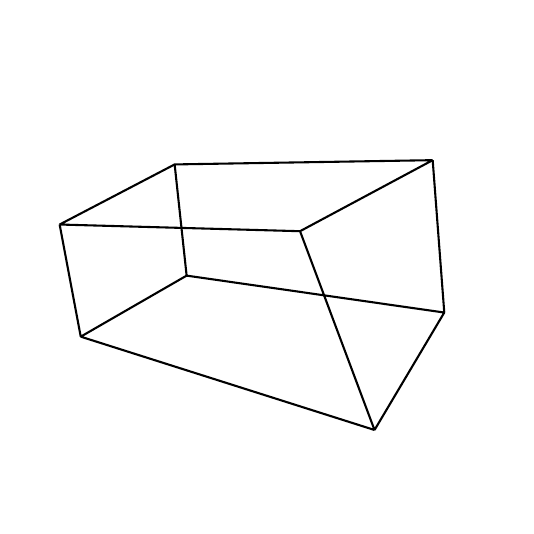}}
   \centering
   {\includegraphics[scale = 0.17]{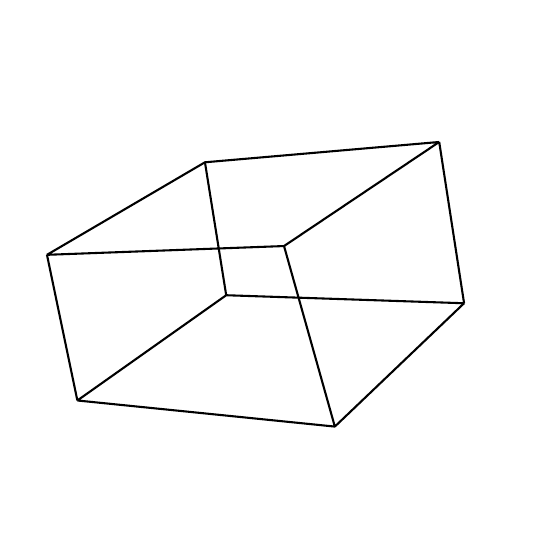}}
   \centering
   {\includegraphics[scale = 0.17]{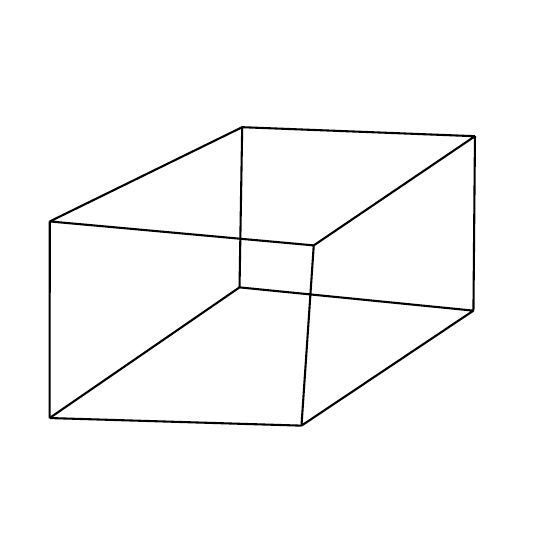}}
   \centering
   {\includegraphics[scale = 0.17]{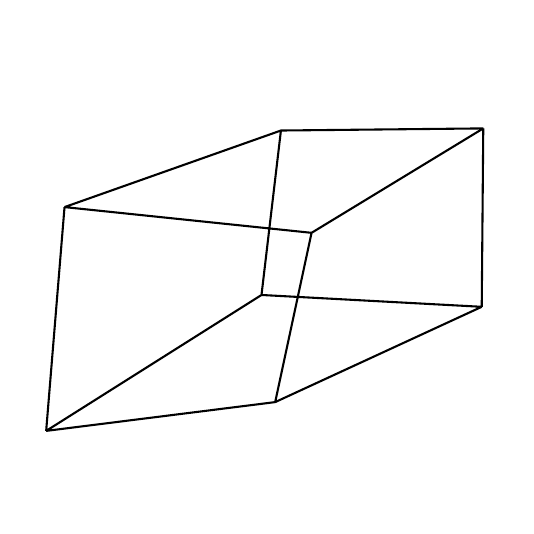}}
   \centering
   {\includegraphics[scale = 0.17]{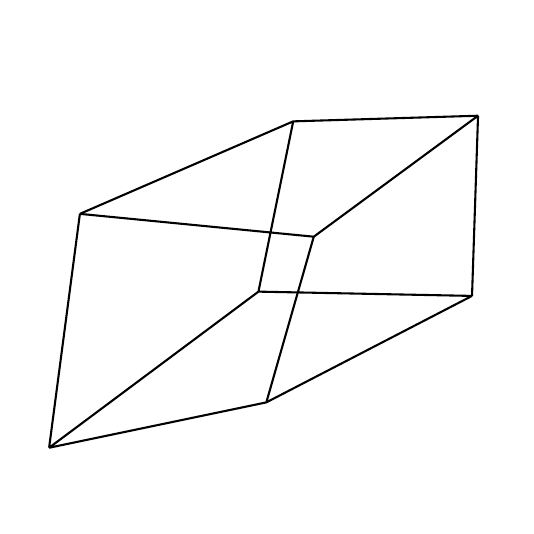}}
   \centering
   {\includegraphics[scale = 0.17]{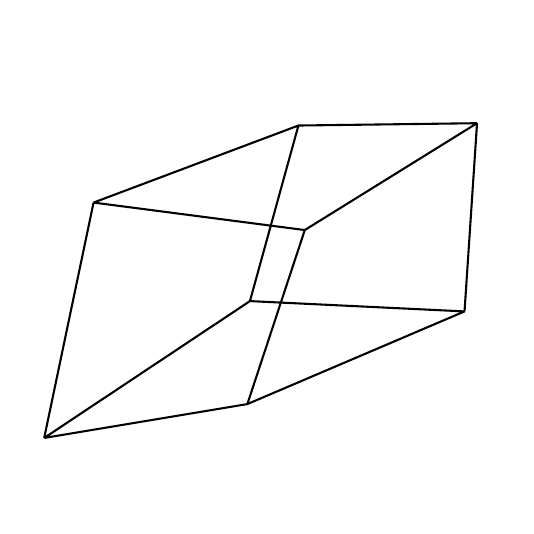}}
   
   \centering
   {$\qquad\qquad$   
   \includegraphics[scale = 0.17]{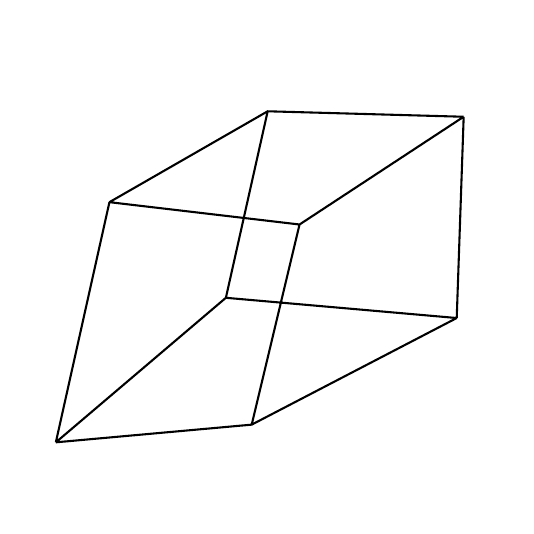}
   \includegraphics[scale = 0.17]{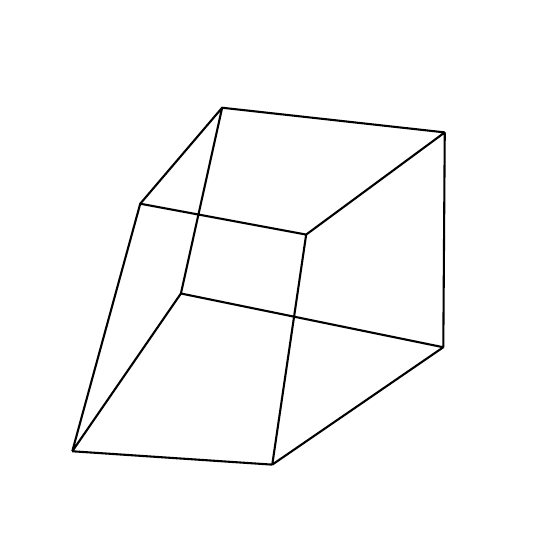}
   \includegraphics[scale = 0.17]{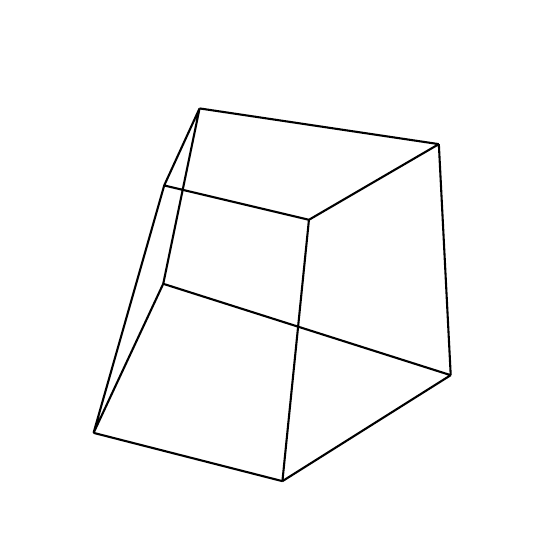}
   \includegraphics[scale = 0.17]{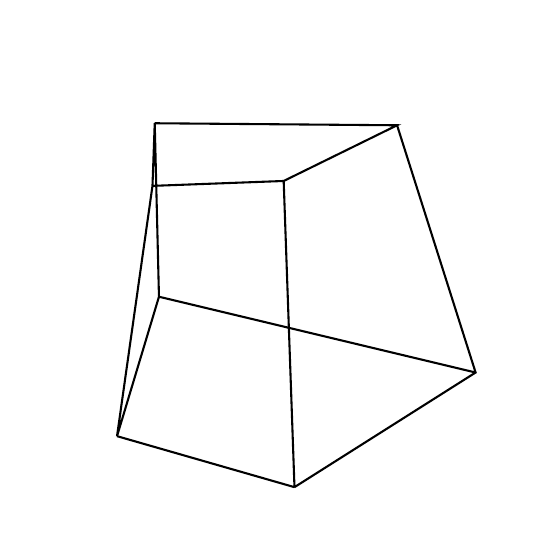}
   \includegraphics[scale = 0.17]{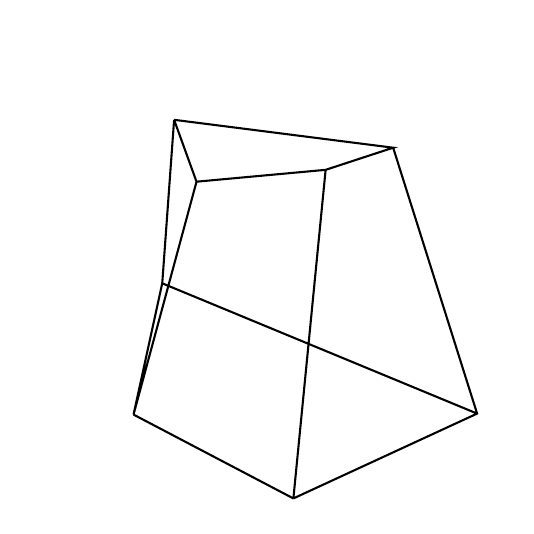}}
   
   \centering
   {$\qquad\qquad$
   \includegraphics[scale = 0.17]{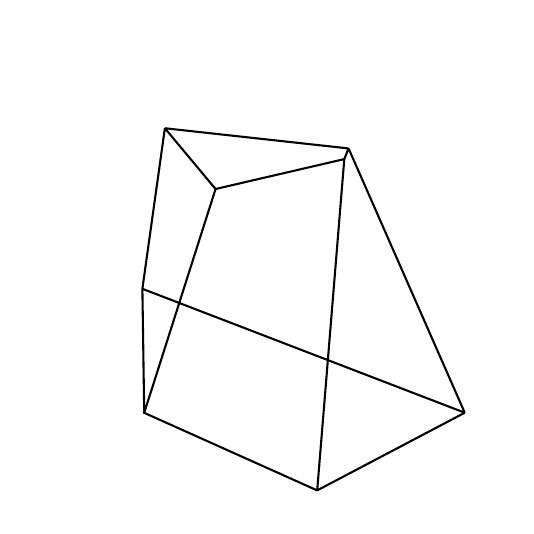}
   \includegraphics[scale = 0.17]{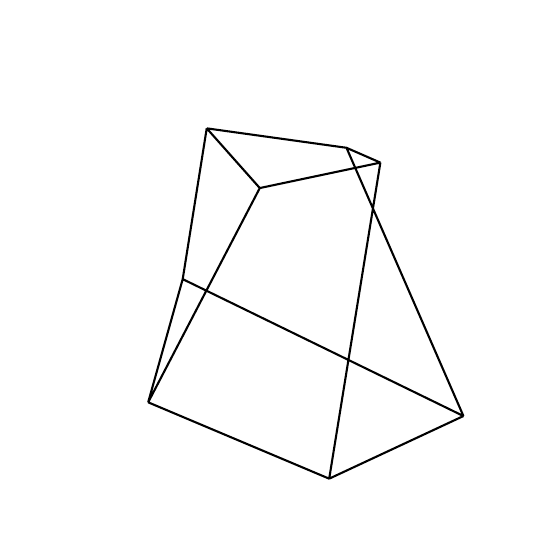}
   \includegraphics[scale = 0.17]{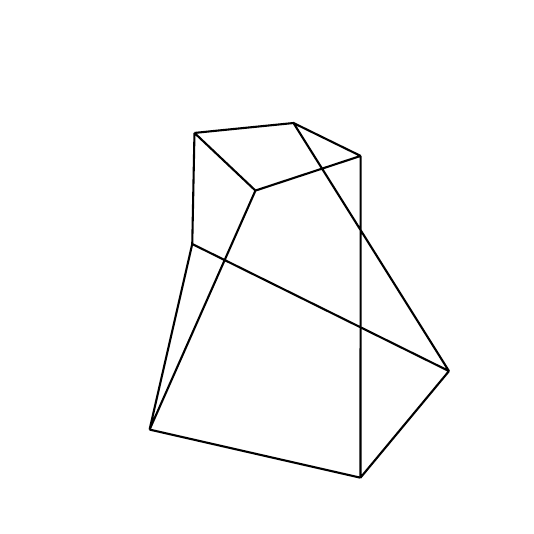}
   \includegraphics[scale = 0.17]{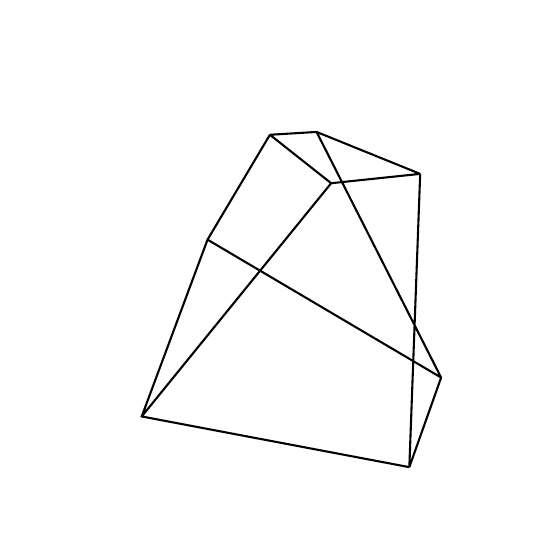}
   \includegraphics[scale = 0.17]{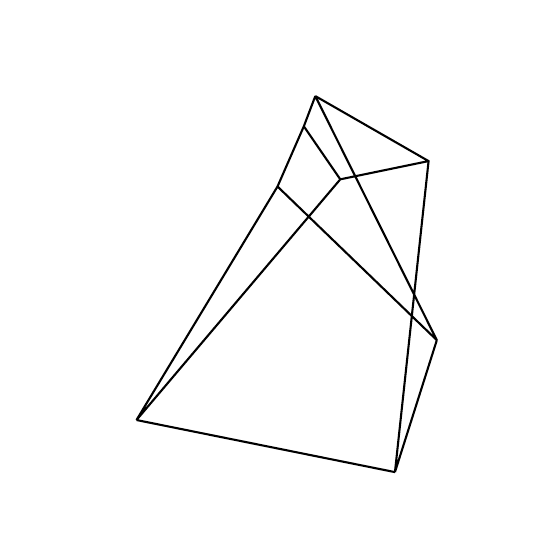}}
   \caption{Configurations of the fitted shapes to the shapes of the perturbed configuration data.}
   \label{fig:conf.simu.fit}
\end{figure}

We obtain the pre-shapes $X_i$ of the configurations $Y_i$ by post-multiplying by transpose of the Helmert sub-matrix $H$ \citep[p49]{Drydmard16} 
and rescaling:  
$$ {X}_i = \frac{ {Y_i} H^T }{  \| {Y_i} H^T \| } , \; i=0,1,\ldots,n. $$
We now apply the iterative shape spline fitting algorithm using unrolling, unwrapping and wrapping to fit the smoothing spline to both 
the original data (the preshapes of $\mu_i$) and the perturbed pre-shape data $X_i$, $i=0,1,\ldots,n$. 
The aim is to predict the smooth underlying path of the original configuration using the perturbed data. 
The candidates for $\lambda$ in the cross-validation were taken from the set $\{10^{-9}, 10^{-7}, 10^{-5}, 10^{-3}, 10^{-1} \}$. 
Consequently the chosen smoothing parameters for the original and perturbed data are
$\lambda=10^{-5}$ and $\lambda=10^{-3}$, respectively.
Figure \ref{fig:conf.simu.fit} shows configurations corresponding to the fitted pre-shapes on the shape smoothing spline to the perturbed data. 
We can see clearly that the fitted configurations form a smoother path compared to the perturbed data, and the fitted configurations 
are closer in shape to the original data than they are to the perturbed data.

We use principal components analysis to provide a low dimensional visualization of the fitted shape shape smoothing splines to the original and perturbed data. 
In each case we unroll the given 16 shapes onto the base tangent space to the first shape along the piecewise geodesic segments through them and unroll the fitted shape smoothing spline onto the tangent space to its starting point. Then, we carry out the principal component analysis on the coordinates of these four unrolled data sets respectively. Figure \ref{fig:pcscore.simul} 
\begin{figure}[htbp]
   \centering \subfigure[Original data]
   {\includegraphics[scale = 0.58]{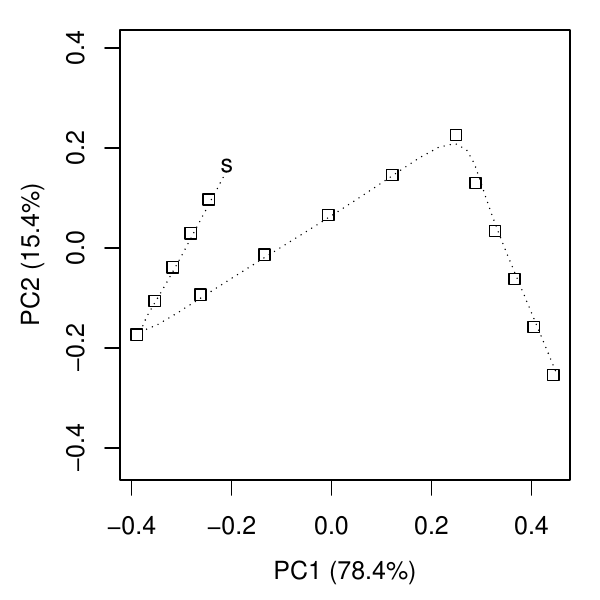}}
   \centering \subfigure[Perturbed Gaussian ($\sigma=0.05$)]
   {\includegraphics[scale = 0.58]{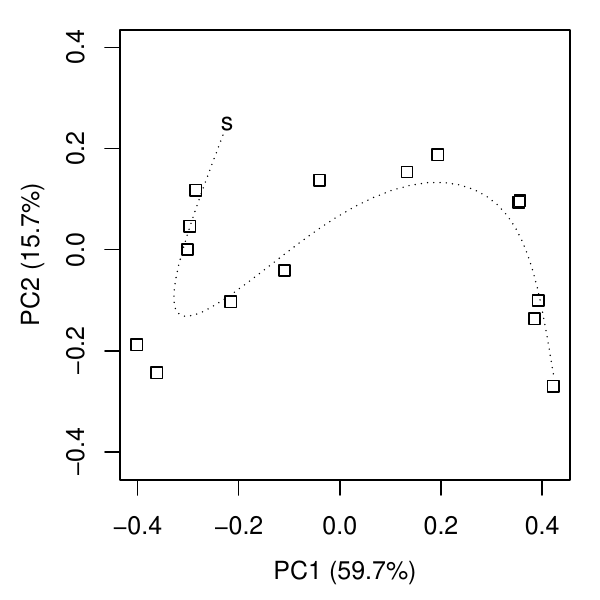}}
   \caption{The first two principal component scores of the unrolled shape data and unrolled fitted shape spline in the tangent spaces at the starting points.}
   \label{fig:pcscore.simul}
\end{figure}
shows the first two principal component scores of the unrolled original data and the corresponding fitted shape spline, in (a),
and the unrolled perturbed data ($\sigma = 0.05$) and the corresponding fitted shape spline, in (b). The given data are displayed as squares
and the dotted lines are for the fitted shape spline with `s' indicating the starting point. For the original non-perturbed data set, the proportion of variance explained by PC1 and PC2 is 78.4\% and 15.4\% respectively and, for the fitted shape spline to the perturbed data, 59.7\% and 15.7\% are explained by PC1 and PC2.  Although the path of the configurations lies close to a plane, it is not exactly in a plane. For 
the original data the percentage of variability in the first two tangent space PCs is 93.8\%, and so the remainder of the variability is in the third PC (6.2\%) which is small and so we ignore this dimension in our analysis.   The percentage variability is included in our analyses throughout to demonstrate that
most of the variability in the data is captured in the first few PCs. 
Noting that the two turning points of these three geodesic segments are at the 6th and 11th configurations, Figure \ref{fig:pcscore.simul}(a) clearly shows the structure of the three geodesic segments which is in accord with the simulation setting. For the perturbed data set, the overall pattern is similar to that of the original data but with local differences due to the noise. We can see that the shape smoothing spline has provided good predictions, with the
structure of the paths being similar between the fitted path through the perturbed data and the original data. 

Further examples are also given in Figure \ref{fig:pcscore.simul2} with different independent Gaussian noise levels (a) $\sigma = 0.02$, (b) $\sigma = 0.1$, and independent Student's $t_3$ noise with standard deviation (c)  
$\sigma = 0.02$ and (d) $\sigma = 0.05$. In the cases (a)-(c) the three segments can be seen in the fitted paths. The final plot (d) contains a large outlier and the structure is harder to see.   Note that with more noise in (b) and (d) the 
first two PCs explain a smaller percentage of the shape variability (57.5\% and 71.2\% respectively for (b) and (d)). 

\begin{figure}[htbp]
   \centering \subfigure[Perturbed Gaussian ($\sigma=0.02$)]
   {\includegraphics[scale = 0.58]{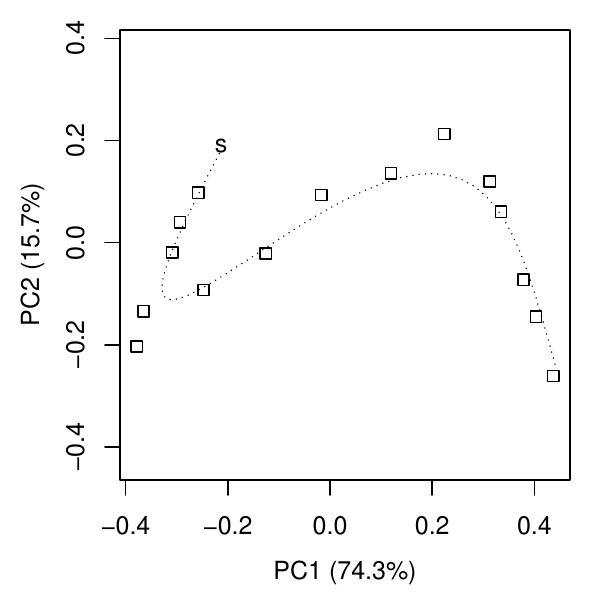}}
   \centering \subfigure[Perturbed Gaussian ($\sigma=0.1$)]
   {\includegraphics[scale = 0.58]{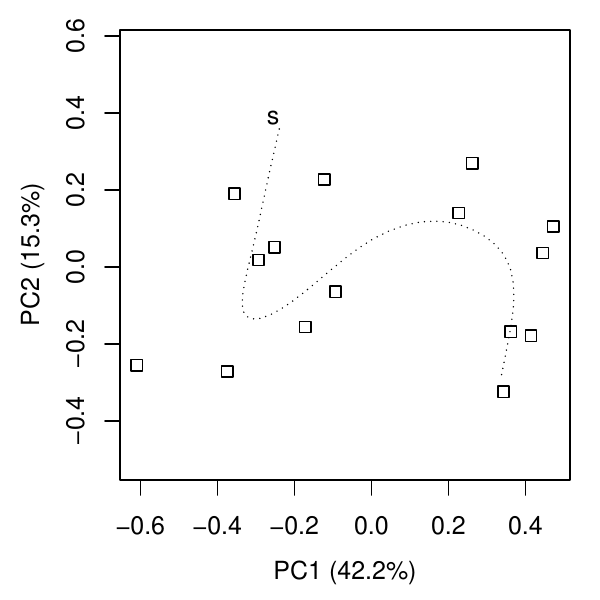}}
   \centering \subfigure[Perturbed $t_3$ ($\sigma=0.02$)]
   {\includegraphics[scale = 0.58]{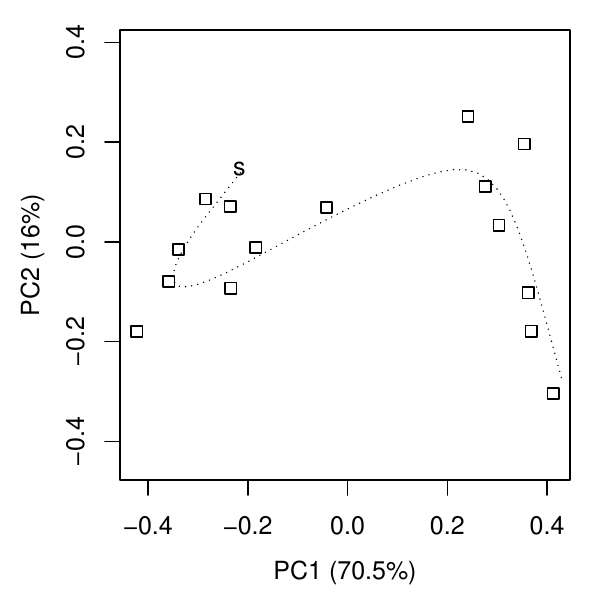}}
    \centering \subfigure[Perturbed $t_3$ ($\sigma=0.05$)]
   {\includegraphics[scale = 0.58]{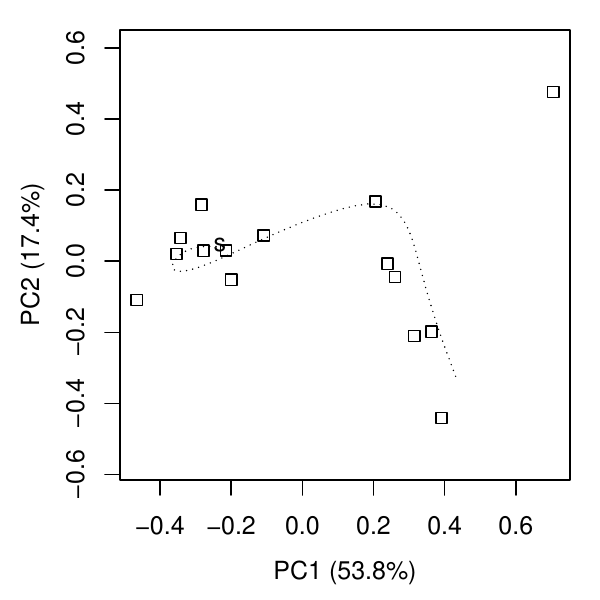}}  
   \caption{The first two principal component scores of the unrolled shape data and unrolled fitted shape spline in the tangent spaces at the starting points for independent (a) Gaussian noise levels (a) $\sigma = 0.02$, (b) $\sigma = 0.1$, and independent Student's $t_3$ noise with standard deviation (c)  
$\sigma = 0.02$ and (d) $\sigma = 0.05$   
   }
   \label{fig:pcscore.simul2}
\end{figure}

These simulations demonstrate that the shape smoothing splines can
be used for fitting smooth paths for widely varying 3D data and give appropriate results when noise is present.

\end{document}